\documentclass{article}

\usepackage{amsfonts}
\usepackage{amsmath}
\usepackage{amssymb}
\usepackage{graphicx}
\usepackage{verbatim}
\usepackage{float}
\usepackage{xcolor}

\usepackage{listings}
\lstset{language=Pascal}

\usepackage{amsfonts}
\usepackage{amsmath}
\usepackage{amssymb}
\usepackage{enumerate}

\newcommand{\ie}{i.e.}
% Decision Problems
%\newcommand{\SAT}{\mathrm{SAT}}
%\newcommand{\MAXSAT}{\mathrm{MAXSAT}}
%\newcommand{\MAXTSAT}{\mathrm{MAX3SAT}}

% Proof Commands
%\newcommand{\PROOF}{\begin{proof}}%{{\noindent\bf{\em Proof:\/}~~}}

%\newcommand{\QED}{\end{proof}}

% Miscellaneous

%\newcommand{\ceil}[1]{ \left\lceil #1 \right\rceil }

\newcommand{\myset}[2]{ \left\{ #1 \left| \, #2 \right. \right\} }

\newcommand{\prefix}{\sqsubseteq}

\newcommand{\ignore}[1]{}

% Blackboard Bold Sets
\newcommand{\N}{\mathbb{N}}
\newcommand{\Q}{\mathbb{Q}}

% Measure and Dimension

%\newcommand{\pspace}{{\mathrm{pspace}}}

%\newcommand{\comp}{{\mathrm{comp}}}

\newcommand{\dimfs}{\mathrm{dim}_\mathrm{FS}}

\newcommand{\dimFS}{\dimfs}

\newcommand{\polylog}{{\mathrm{polylog}}}

\newcommand{\regSS}{S^\infty}
\newcommand{\expSS}{S^{\mathrm{exp}}}
\newcommand{\strexpSS}{S^{\mathrm{exp}}_{\mathrm{str}}}
\newcommand{\strSS}{S^\infty_{\mathrm{str}}}

\newcommand{\D}{{\mathcal{D}}}

%\newcommand{\FREQ}{{\mathrm{FREQ}}}
%\newcommand{\freq}{{\mathrm{freq}}}

% Prediction

% Complexity Classes

\newcommand{\strings}{\{0,1\}^*}

%\newcommand{\REC}{\mathrm{REC}}

%\newcommand{\SPARSE}{\mathrm{SPARSE}}

%\newcommand{\DENSE}{\mathrm{DENSE}}

%\newcommand{\DTIME}{\mathrm{DTIME}}

%\newcommand{\NP}{{\ensuremath{\mathrm{NP}}}}

%\newcommand{\SIZE}{{\rm SIZE}}

%\newcommand{\SIZEio}{\SIZE_{\rm i.o.}}

%\newcommand{\BPP}{{\rm BPP}}

%\newcommand{\E}{{\rm E}}
%\newcommand{\NE}{{\rm NE}}

%\newcommand{\EE}{{\rm EE}}

%\newcommand{\ESPACE}{{\rm ESPACE}}

%\renewcommand{\Pi}{\ensuremath{{\mathrm P}_i}}

%\newcommand{\Etwo}{{\rm E_2}}

%\newcommand{\EXP}{{\rm EXP}}

%\newcommand{\DSPACE}{{\rm DSPACE}}
%\newcommand{\NEE}{{\rm NEE}}
%\newcommand{\Poly}{{\rm P}}
%\newcommand{\Pc}{{\rm P}}
%\newcommand{\PPoly}{{\rm {P/poly}}}

%\newcommand{\WS}{{\rm WS}}

% Calligraphy

%\newcommand{\calP}{{\cal P}}

% Kolmogorov Complexity

%\newcommand{\KS}{{\mathrm{KS}} }

% Reductions

%\newcommand{\murec}[1]{{\mu({#1}\mid \REC)}}
%\newcommand{\muex}[1]{{\mu({#1}\mid \Etwo)}}
%\newcommand{\redu}[2]{{\leq^{\rm {{#1}}}_{{\rm{#2}}}}}
%\newcommand{\predu}[1]{{\redu{P}{#1}}}
%\newcommand{\pmred}{{\predu{m}}}
%\newcommand{\ptred}{{\predu{T}}}
% Reduction Closures
%\newcommand{\Pm}{\mathrm{P}_{\mathrm{m}}}

% Degrees
%\newcommand{\degpm}{\mathrm{deg}^{\mathrm{p}}_{\mathrm{m}}}
%\newcommand{\degpr}{\mathrm{deg}^\mathrm{p}_r}

% Hard and Complete sets
%\newcommand{\m}{\mathrm{m}}

% Stuff from CPED

% \upharpoonright \C}
% \upharpoonright \C}
% \upharpoonright \C}
% \upharpoonright \C}

% Stuff from DERC

%\newcommand{\CS}{\mathrm{CS}}

% Names

%\newcommand{\flecha}{\to}
%\newcommand{\PPolyio}{\Ppoly^\mathrm{i.o.}}

%\renewcommand{\SIZEio}{\SIZE^\mathrm{i.o.}}

\usepackage{ifthen}

\newtheorem{theorem}{Theorem}[section]

\newtheorem{property}[theorem]{Property}

\newtheorem{lemma}[theorem]{Lemma}
\newtheorem{corollary}[theorem]{Corollary}
\newtheorem{claim}[theorem]{Claim}

\newtheorem{construction}[theorem]{Construction}

%\newtheorem{definition}[theorem]{Definition}
%Esto es con cuadrado al final de la def
%\newenvironment{definition}
%{ {\noindent {\bf Definition.}} } { \hfill $\Box$ }
\newenvironment{definition}
{ {\noindent {\bf Definition.}} } {  }

\newenvironment{example*}[1]
{ {\noindent {\bf Example #1.}} } {  }
\newenvironment{claim*}[1]
{ {\noindent {\bf Claim #1.}} } {  }
\newenvironment{theorem*}
{ {\noindent {\bf Theorem.}} } {  }
\newtheorem{observation}[theorem]{Observation}

\newenvironment{proof}[1][xyzzy]
{
{\noindent {\bf Proof}%
\ifthenelse{\equal{#1}{xyzzy}}{{\bf .}}{~(#1).}} } { \hfill $\Box$ }

\newenvironment{proofof}[1]
{
{\noindent {\bf Proof #1.}%
}} { \hfill $\Box$ }

%\numberwithin{equation}{section}
%\numberwithin{theorem}{section}

%\numberwithin{definition}{section}

%\theoremstyle{definition}
%\newtheorem*{defi}{Definition}{\bf}{}
%\input{ bigpage.tex}
%\newenvironment{defi}
%{ {\noindent {\bf Definition.}} } {
%\hfill $\Box$ }

%\newtheorem*{acks}{Acknowledgments}

%\input{\INC bigpage.tex}
%\input{../commands.tex}
%\input{../numbering2.tex}

\newcommand{\sigb}{\Sigma_b}

\newcommand{\upto}{\upharpoonright}

\DeclareMathOperator{\real}{real}

\DeclareMathOperator{\seq}{seq}

\newcommand{\dlz}{d_{\mathrm{LZ}(b)}}
\newcommand{\dlzp}{d_{\mathrm{LZ+}(b)}}
\DeclareMathOperator{\head}{head} \DeclareMathOperator{\tail}{tail}

\title{Computing Absolutely Normal Numbers in Nearly Linear Time }
\author{Jack H. Lutz\footnote{Department of Computer Science, Iowa State
University, Ames, IA 50011 USA. lutz@cs.iastate.edu.} \and Elvira
Mayordomo\footnote{Departamento de Inform\'atica e Ingenier\'ia de
Sistemas, Instituto de Investigaci\'on en Ingenier\'{\i}a de
Arag\'on, Universidad de Zaragoza, 50018 Zaragoza, SPAIN.
elvira@unizar.es.}}

\date{}

\begin{document}

\maketitle

\begin{abstract}
A real number $x$ is {\sl absolutely normal\/} if, for every base
$b\ge 2$, every two equally long strings of digits appear
           with equal asymptotic frequency in the base-$b$ expansion of $x$.
           This paper presents an explicit algorithm that
           generates the binary expansion of an absolutely normal number $x$,
           with the $n$th bit of $x$ appearing after
           $n\polylog(n)$ computation steps.
           This speed
           is achieved by
           simultaneously computing and diagonalizing against a martingale
 that incorporates Lempel-Ziv parsing algorithms in all bases.

\end{abstract}

{\bf Keywords:} algorithms, computational complexity, Lempel-Ziv
parsing, martingales, normal numbers

Declarations of interest: none

\section{Introduction}\label{sec1}

In 1909 Borel\footnote{Borel's original definition was that $x$ is
normal in base $b$ if $x$ is simply normal in all bases $b, b^2 b^3,
\ldots$ Our definition here is well known to be equivalent to
Borel's.}
 \cite{Bore09}\ defined a real number $\alpha$ to be
{\sl normal\/} in base $b$ ($b\ge 2$) if, for every $m\ge 1$ and
every length-$m$ sequence $w$ of base-$b$ digits, the asymptotic,
empirical frequency of $w$ in the base-$b$ expansion of $\alpha$ is
 $b^{-m}$. Borel defined $\alpha$ to be {\sl absolutely normal\/} if
it is normal in every base $b\ge 2$. (This clearly anticipated the
fact, proven a half-century later, that a real number may be normal
in one base but not in another \cite{Cass59, Schm60}.) Borel's proof
that {\sl almost every\/} real number (i.e., every real number
outside a set of Lebesgue measure 0) is absolutely normal was an
important milestone in the prehistory of Kolmogorov's development of
the rigorous, measure-theoretic foundations of probability theory
\cite{Kolmo50}. For example, it is section 1 of Billingsley's
influential textbook \cite{Bill95}. The recent book \cite{Buge12}\
provides a good exposition of the many aspects of current research
on normal numbers.

Borel's proof shows that absolutely normal numbers are commonplace,
i.e., that a ``randomly chosen'' real number is absolutely normal
with probability 1.   Rational numbers cannot be normal in even a
single base $b$, since their base-$b$ expansions are eventually
periodic, but computer analyses of the expansions of $\pi$, $e$,
$\sqrt{2}$, $\ln 2$, and other irrational numbers that arise in
common mathematical practice suggest that these numbers are
absolutely normal \cite{BorBai08}. Nevertheless, no such ``natural''
example of a real number has been proven to be normal in {\sl any\/}
base, let alone absolutely normal.  The conjectures that every
algebraic irrational is absolutely normal and that $\pi$ is
absolutely normal are especially well known open problems
\cite{BorBai08, Buge12, Wago85}.

The work reported here concerns a much newer problem, namely, the
{\sl complexity\/} of explicitly computing a real number that is
provably absolutely normal, even if it is not  natural in the above
informal sense. Sierpinski and Lebesgue gave explicit constructions
of absolutely normal numbers in 1917 \cite{Sier17, Lebe17}, but
these were intricate limiting processes that offered no complexity
analyses (coming two decades before the theory of computing) and
little insight into the nature of the numbers constructed. In a 1936
note that was not published in his lifetime, Turing \cite{Turi13}\
gave a constructive proof that almost all real numbers are
absolutely normal and then {\sl derived\/} constructions of
absolutely normal numbers from this proof. Moreover, although Turing
does not mention Turing machines or computability, the note is
typed, with equations handwritten by him, on the back of a draft of
his paper on computable real numbers \cite{Turi36}, so it is
reasonable to interpret ``constructively'' in a
computability-theoretic sense. And in fact his proof, with 2007
corrections by Becher, Figueira, and Picchi \cite{BeFiPi07},
explicitly computes an absolutely normal number $x$. However, this
algorithm is very inefficient, requiring a number of steps that is
doubly exponential in $n$ to compute the $n$th bit of $x$.
(Independently, Knuth \cite{knut65}\ published in 1965 an explicit
construction of an absolutely normal number, also inefficient).

Some 75 years  passed between Turing's algorithm and more efficient
ones. It was only in 2013 that Becher, Heiber, and Slaman
\cite{BeHeSl13}\ published an algorithm that computes an absolutely
normal number in polynomial time. Specifically, this algorithm
computes the binary expansion of an absolutely normal number $x$,
with the $n$th bit of $x$ appearing after $O(n^2f(n))$ steps for any
computable unbounded nondecreasing function $f$. (Unpublished
polynomial-time algorithms for computing absolutely normal numbers
were  announced independently by Mayordomo \cite{Mayo13}\ and
Figueira and Nies \cite{FiNi13, FigNie15}\ at about the same time.)
We omit here extensive work on the discrepancy, that is, the order
of convergence to normality of an absolutely normal number and its
tradeoff with the time complexity of the construction of the
corresponding number (see the latest results in \cite{ABSS17},
\cite{MaScTi18} and their references).

In this paper we present a new algorithm that provably computes an
absolutely normal in nearly linear time.  Our algorithm computes the
binary expansion of an absolutely normal number $x$, with the
 first to $n$th bits  of $x$ appearing after $O(n\polylog n)$ steps. The
term ``nearly linear time'' was introduced by Gurevich and Shelah
\cite{GurShe89}. In that paper they showed that, while linear time
computability is very model-dependent, nearly linear time is very
robust.  For example, they showed that random access machines,
Kolmogorov-Uspensky machines, Schoenhage machines, and random-access
Turing machines share exactly the same notion of nearly linear time.

The novelty of our algorithm is its use of the Lempel-Ziv parsing algorithm to achieve its nearly linear time bound.  For
each base $b\ge 2$, we use a martingale (betting strategy) that employs the Lempel-Ziv parsing algorithm and is implicit in
the work of Feder \cite{Fede91}. This base-$b$ Lempel-Ziv martingale succeeds exponentially when betting on the successive
digits of the base-$b$ expansion of any real number that is not normal in base $b$.  Our algorithm simultaneously computes
and diagonalizes against (limits the winnings of) a martingale that incorporates efficient proxies of all these
martingales, thereby efficiently computing a real number that is normal in every base.

The structure of this paper is based on the main result proof.
Building on the base-$b$ normality characterization in terms of
finite state martingales and the universality of Lempel-Ziv
martingale, we need to construct a conservative version of the
Lempel-Ziv martingale that does not fluctuate too much and then
establish a base change method for such conservative martingales.
Finally a careful efficient combination of all resulting strategies
for different bases is needed.

The rest of this paper is organized as follows.  Section \ref{sec2}\
presents the base-$b$ Lempel-Ziv martingales and their main
properties.  Section \ref{sec3}\  shows how to transform a base-$b$
Lempel-Ziv martingale into a base-$b$  supermartingale
 with an efficiently computable nondecreasing savings account
that is unbounded whenever the base-$b$ Lempel-Ziv martingale
succeeds exponentially. Section \ref{sec4}\ develops an efficient
method for converting a base-$b$ martingale with an efficiently
computable savings account to a base-2 martingale that succeeds
whenever the base $b$ savings account is unbounded. Section
\ref{sec5}\ presents an algorithm that exploits the uniformity of
these constructions to efficiently and simultaneously compute (a) a
single base-2 martingale $d$ that succeeds on the binary expansion
of every real number $x$ for which some base-$b$ martingale succeeds
on the base-$b$ expansion of $x$ and (b) a particular real number
$x$ on which binary expansion $d$ does not succeed. This $x$ is,
perforce, absolutely normal. Section \ref{sec6}\ presents an open
problem related to our work.

For our complexity arguments, we use a (log-cost)  RAM model. By
\cite{AngVal79}\ a nearly linear time bound on this model is
equivalent a nearly linear time bound for any of the robust models
in \cite{GurShe89}.

\section{Lempel-Ziv Martingales}\label{sec2}

For each base $b\ge 1$ we let $\Sigma_b=\{0, 1, \ldots, b-1\}$ be
the alphabet of {\sl base-$b$\/}  {\sl digits}. We write
$\Sigma_b^*$ for the set of all (finite) {\sl strings\/} over
$\Sigma_b$ and $\Sigma_b^{\infty}$ for the set of all (infinite)
{\sl sequences\/} over $\Sigma_b$. We write $|x|$ for the length of
a string or sequence $x$, and we write $\lambda$ for the {\sl empty
string}, the string of length 0. For $x\in
\Sigma_b^*\cup\Sigma_b^{\infty}$ and $0\le i\le j<|x|$, we write
$x[i .. j]$ for the string consisting of the $i$th through $j$th
digits in $x$. For $x\in \Sigma_b^*\cup\Sigma_b^{\infty}$ and $0\le
n<|x|$, we write $x\upto n=x[0..n-1]$. For $w\in\Sigma_b^*$ and
$x\in \Sigma_b^*\cup\Sigma_b^{\infty}$, we say that $w$ is  a {\sl
prefix\/} of $x$, and we write $w\sqsubseteq x$, if $x\upto |w|=w$.

Let $\D$ be the set of dyadic rationals. Let $f: \Sigma_b^* \to [0,
\infty)$ be a function. $f$ is {\sl nearly linear time computable}
if there exists $a,c> 1$ and $\widehat{f}: \Sigma_b^* \to\D$ such
that $\widehat{f}$ is exactly nearly linear time computable and for
all $w\in \Sigma_b^*$, $|\widehat{f}(w)-f(w)|\le a/{|w|^c}$.

%\begin{definition}
Let $f: \Sigma_b^* \to [0, \infty)$ be a function. $f$ is {\sl
online nearly linear time computable} if there exists $g: \Sigma_b^*
\to [0, \infty)$  such that
\begin{enumerate}
\item $f, g$ are nearly linear time computable,
\item for $w\in \Sigma_b^*, a\in
\Sigma_b$, $f(wa)$ and $g(wa)$ can be computed in polylogarithmic
time from $w, f(w), g(w)$.

\end{enumerate}
%\end{definition}

A ({\sl base}-$b$) {\sl martingale\/} is a function $d:
\Sigma_b^*\to [0, \infty)$ satisfying
\begin{equation}\label{cond21}d(w)=\frac{1}{b}\sum_{a\in\Sigma_b}d(wa)\end{equation}
for all $w\in\Sigma_b^*$. (This is the original martingale notion
introduced by Ville \cite{Vill39}\ and implicit in earlier papers of
L\'evy \cite{Levy35, Levy37}. Its relationship to Doob's subsequent
modifications \cite{Doob40}, which are the ``martingales'' of
probability theory, is explained in \cite{HitLut06}\ along with the
reason why Ville's original notion is still essential for
algorithmic information theory.) Intuitively, a base-$b$ martingale
$d$ is a {\sl strategy for betting\/} on the subsequent digits in a
sequence $S\in\Sigma_b^{\infty}$, with the strategy encoded in such
a way that $d(S\upto n)$ is the amount of money that a gambler using
the strategy $d$ has after the first $n$ bets. The condition
(\ref{cond21}) says that the payoffs for these bets are {\sl fair\/}
in the sense that the conditional expectation of $d(wa)$, given that
$w$ has occurred (and assuming that the digits $a\in\Sigma_b$ are
equally likely), is $d(w)$.

A function $g: \Sigma_b^*\to [0, \infty)$ (which may or may not be a
martingale) {\sl succeeds\/} on a sequence $S\in\Sigma_b^{\infty}$
if \begin{equation}\label{equa22}\limsup_{n\to\infty}g(S\upto
n)=\infty,\end{equation}\ie, if its winnings on $S$ are unbounded.
The {\sl success set\/} of a function $g: \Sigma_b^* \to [0,
\infty)$ is \[\regSS[g]=\myset{S\in\Sigma_b^{\infty}}{g \mbox{
succeeds on }S}.\] A function $g: \Sigma_b^* \to [0, \infty)$ {\sl
succeeds exponentially\/} on a sequence $S\in \Sigma_b^{\infty}$ if
\begin{equation}\label{equa23}\limsup_{n\to \infty}\frac{\log
g(S\upto n)}{n}>0,\end{equation} \ie, if its winnings on $S$ grow at
some exponential rate, perhaps with recurrent setbacks. The {\sl
exponential success set\/} of a function $g: \Sigma_b^* \to [0,
\infty)$ is \[\expSS[g]=\myset{S\in\Sigma_b^{\infty}}{g \mbox{
succeeds exponentially on }S}.\]

The {\sl $f(n)$ success set\/} of a function $g: \Sigma_b^* \to [0,
\infty)$ is
\[S^{f(n)}[g]=\myset{S\in\Sigma_b^{\infty}}{\limsup_{n\to
\infty}\frac{\log g(S\upto n)}{\log f(n)}\ge 1}.\] Note that $\expSS[g]=\cup_{\epsilon>0}S^{2^{\epsilon n}}[g]$.

For technical reasons we will also need to consider the notion of
supermartingale, which in many contexts turns out to be equivalent
to the notion of martingale.

A ({\sl base}-$b$) {\sl supermartingale\/} is a function $d:
\Sigma_b^*\to [0, \infty)$ satisfying
\begin{equation}\label{cond21}d(w)\le\frac{1}{b}\sum_{a\in\Sigma_b}d(wa)\end{equation}
for all $w\in\Sigma_b^*$.

\begin{lemma}\label{lem2x} For each online nearly linear time computable supermartingale $d$ there is an online nearly linear time computable martingale
$d'$ such that for every $w\in \Sigma_b^*$ $d'(w)\ge d(w)$. If for
every $w\in \Sigma_b^*$ $d(w)\le a |w|^c$, then $d'(w)\le a
|w|^{c+1}$. If for some $m, C$, $d(w)=C$ for $|w|\le m$ then
$d'(w)=C$ for $|w|\le m$.
\end{lemma}

\begin{proofof}{(proof sketch)}

We define $d'$ recursively as follows,
 $d'(\lambda)=d(\lambda)$ and for $w\in \Sigma_b^*, a\in
\Sigma_b$

\[d'(wa) = d'(w) + d(wa)  -1/b \sum_{b'}{d}(wb')\]

$d'$ is online nearly linear computable. It can be proven by
induction
 that $d'$ is a martingale and that for all for $w\in \Sigma_b^*$,  $d'(w)\ge
 d(w)$ and $d'(w)\le \sum_{u\sqsubseteq w} d(u)$.

\end{proofof}

 A function $g: \Sigma_b^*\to [0,
\infty)$ {\sl succeeds strongly\/} on a sequence $S\in\Sigma_b^{\infty}$ if (\ref{equa22}) holds with the limit superior
replaced by a limit inferior \ie, if the winnings converge to $\infty$. A function $g: \Sigma_b^*\to [0, \infty)$ {\sl
succeeds strongly exponentially\/} on a sequence $S\in\Sigma_b^{\infty}$ if (\ref{equa23}) holds with the limit superior
replaced by a limit inferior \ie, if the winnings grow at exponential rate. The {\sl strong success sets\/} $\strSS[g]$
 and $\strexpSS[g]$ of a function $g: \Sigma_b^*\to [0, \infty)$ are defined in the now-obvious manner. It is clear that the
inclusions
\[\strexpSS[g]\subseteq \expSS[g]\subseteq \regSS[g]\] and
\[\strexpSS[g]\subseteq \strSS[g]\subseteq \regSS[g]\]
hold for all $g: \Sigma_b^*\to [0, \infty)$.

For each base $b\ge 2$ the {\sl base-$b$ Lempel-Ziv martingale\/} is a particular martingale $\dlz$ based on the Lempel-Ziv
parsing algorithm \cite{LemZiv78}, as we now explain.

Formally, $\dlz$ is computed by the algorithm in Figure \ref{figu1},
but some explanation is appropriate here. The algorithm is written
with several instances of parallel assignment. For example, the
second line initializes $x$, $L(x)$, and $d$ to the values
$\lambda$, 1, and 1, respectively. The items $T$, $j$, and $x(j)$
are not needed for the computation of $\dlz(w)$, but they are useful
for understanding and analyzing the algorithm.

\begin{figure}[H]
\begin{lstlisting}[mathescape=true]
         input $w\in\Sigma_b^*$;
         $x, L(x), d = \lambda, 1, 1$;
         $T, j = \{\lambda\}, 0$;
         while true do
         begin
            if $w= \lambda$ then output $d$ and halt;
            if $L(x)=1$ then
            begin
                $L(x)=b$;
                for each $0\le a<b$ do $L(xa)=1$;
                $T, x(j), j = T\cup \{x\}\Sigma_b, x, j+1$;
                $x=\lambda$;
            end
            else
            begin
                $a,w = \head(w), \tail(w)$;
                $L(x), x, d = L(x)+b-1, xa, \frac{b L(xa)}{L(x)}d$
            end
        end
      \end{lstlisting}\caption{Algorithm for computing $\dlz(w)$.}\label{figu1}
      \end{figure}

The growing set $T$ of strings in $\Sigma_b^*$ always contains all
the prefixes of all its elements, so it is a tree. We envision this
tree as being oriented with its root at the top and the immediate
children $v0, v1, \ldots, v(b-1)$ of each interior vertex $v$ of $T$
displayed left-to-right below $T$. The {\sl dictionary\/} of the
algorithm is the current set of leaves of $T$.

The string $x$ in the algorithm is always an element of (\ie,
location in) the tree $T$, and $L(x)$ is always the number of leaves
of $T$ that are descendants of $x$. We regard $x$ as a descendant of
itself, so $x$ is a leaf if and only if $L(x)=1$.

It is clear that $\dlz(\lambda)=1$. In fact, the algorithm's
successive values of $d$ are the values $\dlz(u)$ for successive
prefixes $u$ of the input string $w$. More precisely, if $w_t$ and
$d_t$ are the values of $w$ and $d$ after $t$ executions of the
else-block, then $w=(w\upto t)w_t$ and $d_t=\dlz(w\upto t)$.

For $w\in\Sigma_b^*$ we define the tree $T(w)$ as follows. If
$w=\lambda$, then $T(w)=\{\lambda\}$. If $w=w'a$, where
$w'\in\sigb^*$ and $a\in\sigb$, then  $T(w'a)$
 is the value of $T$ when the algorithm terminates on input
$w'$. (Note that this is one step before it terminates on input
$w'a$.) For $w\in\sigb^*$ we define $D(w)$ to be the number of
leaves in $T(w)$.  For each $x$ in $T(w)$,  $L(x,w)$ is the number
of leaves of $T(w)$ that are descendants of $x$.

The computation is divided into ``epochs''. At the beginning of each epoch, the string $x$ is $\lambda$, \ie, it is located
at the root of $T$. The string $x$ then takes successive digits from whatever is left of $w$ (because $a, w = \head(w),
\tail(w)$ removes the first digit of $w$ and stores it in $a$), following this path down the tree and updating $d$ at each
step, until $w$ is empty (the end of the last epoch) or $x$ is a leaf of $T$. In the latter case, the $j$th epoch is over,
the $b$ children $x0, x1, \ldots, x(b-1)$ are added to $T$ as new leaves, $x$ is the $j$th {\sl phrase} $x(j)$ of $w$, and
$x$ is reset to the root $\lambda$ of $T$.

When the algorithm terminates, it is clear that exactly one of the
following things must hold.

\begin{description}
\item[(a)] $w=\lambda$.
\item[(b)] $w= x(0)\ldots x(j-1)$.
\item[(c)] $w= x(0)\ldots x(j-1)u$ for some nonempty interior vertex
$u$ of $T(w)$.
\end{description}
In case (a) or (b) we call $w$ a {\sl full parse}. In case (b) or
(c) we call $x(0), \ldots, x(j-1)$ the {\sl full phrases\/} of $w$.
In case (c) we call $u$ the {\sl partial phrase \/} of $w$.

Notice that the function $h(w)= (T(w), j(w), D(w))$ is online nearly
linear time computable.  Notice that our algorithm
 does not work in nearly linear time  when
computing $d$ as a product of $|w|$ factors. We give below an
alternative characterization of $\dlz$ that  will be useful later.

Define the set $A_b=\myset{1+(b-1)r}{r\in\N}$ and the generalized
factorial function $fact_b: A_b\to A_b$ by
\[fact_b(1+(b-1)r)=\prod_{k=1}^r(1+(b-1)k)\] for all $r\in\N$.

\begin{observation}\label{obser21} For all $n\in A_b$,
 \begin{equation}\label{equa25}1\le \frac{fact_b(n)}{e^{\frac{1}{b-1}}\left(\frac{n}{e}\right)^{\frac{n}{b-1}}}\le
n.\end{equation}
\end{observation}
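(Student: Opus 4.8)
The plan is to take logarithms and turn the double inequality (\ref{equa25}) into a sum-versus-integral comparison. Write $m=b-1\ge 1$ and $n=1+mr$ with $r\in\N$, so that $fact_b(n)=\prod_{k=1}^{r}(1+mk)$ and hence
\[
\ln fact_b(n)=\sum_{k=1}^{r}f(k),\qquad f(x):=\ln(1+mx).
\]
Since $\ln\!\big(e^{1/m}(n/e)^{n/m}\big)=\tfrac1m+\tfrac{n}{m}(\ln n-1)=\tfrac{n\ln n-n+1}{m}$, inequality (\ref{equa25}) is equivalent to
\[
0\;\le\;\sum_{k=1}^{r}f(k)\;-\;\frac{n\ln n-n+1}{m}\;\le\;\ln n .
\]
If $r=0$ (i.e.\ $n=1$) both sides are $0$ and there is nothing to prove, so assume $r\ge 1$.

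For the lower bound I would use only that $f$ is increasing: $f(k)\ge\int_{k-1}^{k}f(x)\,dx$, hence $\sum_{k=1}^{r}f(k)\ge\int_0^{r}f(x)\,dx$. With the antiderivative $F(x)=\tfrac1m\big[(1+mx)\ln(1+mx)-(1+mx)\big]$ one computes $\int_0^{r}f=F(r)-F(0)=\tfrac1m\big[n\ln n-n+1\big]$, which is exactly the quantity being subtracted; so the left inequality holds with no slack at all.

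For the upper bound I would peel off the last term and again use monotonicity of $f$:
\[
\sum_{k=1}^{r}f(k)=f(r)+\sum_{k=1}^{r-1}f(k)\;\le\;f(r)+\int_1^{r}f(x)\,dx=\ln n+\frac{n\ln n-n}{m}-\frac{(1+m)\ln(1+m)-(1+m)}{m}.
\]
Comparing this with the target $\ln n+\tfrac{n\ln n-n+1}{m}$, the whole inequality collapses to the purely numerical statement $(1+m)-(1+m)\ln(1+m)\le 1$, i.e.\ $b(1-\ln b)\le 1$. This is the only genuine (if minuscule) computation: the function $g(t)=t-t\ln t$ has $g'(t)=-\ln t<0$ for $t>1$, so $g$ is decreasing on $[2,\infty)$ and $g(b)\le g(2)=2(1-\ln 2)<1$ for every $b\ge 2$.

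The one place requiring a little care — and hence the main (small) obstacle — is the choice of integral comparison on the upper side. The crude estimate $\sum_{k=1}^{r}f(k)\le\int_1^{r+1}f$ leaves a leftover term involving $\ln(1+m(r+1))$ that does not transparently stay below $\ln n$; peeling off $f(r)=\ln n$ first and bounding the remaining $r-1$ terms by $\int_1^{r}f$ is what makes the residual constant collapse to the clean inequality $b(1-\ln b)\le 1$.
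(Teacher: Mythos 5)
Your proof is correct. The paper states Observation \ref{obser21} without any proof, so there is nothing to compare against; your sum-versus-integral argument is the natural one, being the direct generalization of the elementary Stirling estimate $e(n/e)^n\le n!\le e\,n\,(n/e)^n$ (which is exactly the case $b=2$, where $fact_b(n)=n!$). The computations check out: the antiderivative gives $\int_0^r\ln(1+mx)\,dx=(n\ln n-n+1)/m$ exactly, so the lower bound has no slack, and peeling off $f(r)=\ln n$ before bounding the remaining terms by $\int_1^r f$ correctly reduces the upper bound to $b(1-\ln b)\le 1$, which holds for all $b\ge 2$.
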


Using Euler-Maclaurin formula we also have
\begin{observation}\label{obser21p} For all $n\in A_b$,
 \begin{align}\label{equa25p}&fact_b(n)= \\ &C \cdot n^{\frac{n}{b-1}} \cdot
 e^{-{(n-1)}{(b-1)}} \cdot n^{1/2}\cdot e^{(b-1)/(12n)} \cdot e^{-(b-1)^3/(720n^3)}\cdot e^{(b-1)^5/(30240n^5)}
\cdot e^{O(1/n^7)}.\end{align}
\end{observation}

Using the terms in Observation \ref{obser21p}\ we define
\begin{equation}\label{equa25p2}\widehat{fact_b}(n)= C \cdot n^{\frac{n}{b-1}} \cdot
 e^{-{(n-1)}{(b-1)}} \cdot n^{1/2}\cdot e^{(b-1)/(12n)} \cdot e^{-(b-1)^3/(720n^3)}\cdot e^{(b-1)^5/(30240n^5)}.\end{equation}
All terms in the definition of $\widehat{fact_b}(n)$ are computed
with an approximation of $e^{O(1/n^7)}$, that is, all terms in the
exponents will have precision $7\log n +O(1)$.

\begin{lemma}\label{lemm21}(Feder \cite{Fede91}) Let $w\in\sigb^*$.\begin{enumerate}
\item If $w$ is a full parse, then
\[\dlz(w)=\frac{b^{|w|}}{fact_b(D(w))}.\]
\item If $w$ is not a full parse and $u$ is its partial phrase, then
\[\dlz(w)=\frac{b^{|w|}}{fact_b(D(w))}L(u,w),\]
where  $L(u,w)$  is the number of leaves below $u$ in
$T(w)$.\end{enumerate}
\end{lemma}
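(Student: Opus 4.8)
\section*{Proof proposal}

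The plan is to read $\dlz(w)$ off the algorithm of Figure~\ref{figu1} and then reorganize the resulting product phrase by phrase. Since the algorithm halts exactly when $w$ becomes empty and each execution of the else-block removes one digit, the else-block runs exactly $|w|$ times, and each execution multiplies $d$ by $\frac{b\,L(xa)}{L(x)}$; starting from $d=1$ this gives
\[
\dlz(w)\;=\;b^{|w|}\prod\frac{L(xa)}{L(x)},
\]
where the product ranges over those $|w|$ steps and each $L$-value is the value stored at that moment, that is, \emph{before} the accompanying parallel update $L(x)\leftarrow L(x)+b-1$. I would then group the $|w|$ else-steps by the epochs of the parse: one epoch per full phrase $x(1),\dots,x(j)$, plus, in case~(c), one last epoch for the partial phrase $u$. (The very first iteration expands the root without reading a digit and contributes no factor; in case~(a) there are no epochs and the claim is immediate.)

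The crux is that the product over a single epoch telescopes. During an epoch the variable $x$ descends a path $\lambda=v_0\to v_1\to\cdots\to v_m$, never revisiting a vertex it has left, so each of $v_0,\dots,v_{m-1}$ is the value of $x$ in exactly one else-step of the epoch, and the stored value of $L(v_i)$ is not changed between the step that moves \emph{into} $v_i$ (there it is a numerator) and the following step, which moves \emph{out of} $v_i$ (there it is a denominator). Hence the epoch contributes
\[
\prod_{i=1}^{m}\frac{b\,L(v_i)}{L(v_{i-1})}\;=\;b^{m}\,\frac{L^{*}(v_m)}{L^{*}(\lambda)},
\]
where $L^{*}$ is the value of $L$ at the start of the epoch and $m$ is the length of the phrase. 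For a full-phrase epoch the endpoint $v_m$ is a leaf of the current tree, so $L^{*}(v_m)=1$; for the partial-phrase epoch $v_m=u$ is an interior vertex and $L^{*}(u)$ is the number of leaves below $u$ in the tree present at the start of that epoch.

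It remains to identify $L^{*}(\lambda)$ and that last tree. An induction on epochs shows that at the start of every epoch the table $L$ agrees with the true leaf-count below each vertex of the current tree: descending past an interior vertex pre-adds to it exactly the $b-1$ leaves it will gain when the leaf at the foot of the path is expanded, so the invariant is restored at each epoch boundary, with the root-expanding first iteration as the base case. Hence at the start of the $e$-th epoch the current tree has $1+e(b-1)$ leaves --- the initial leaf, plus $b-1$ from the root expansion, plus $b-1$ from each of the (necessarily already expanded) earlier full phrases --- so $L^{*}(\lambda)=1+e(b-1)$; and the tree at the start of the last epoch is exactly $T(w)$, because that epoch ends without an expansion (its endpoint is reached just as $w$ runs out) and so $T$ does not change afterward. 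Multiplying the epoch contributions now finishes the proof: the exponents of $b$ sum to $|w|$; the denominators multiply to $\prod_{e=1}^{r}\bigl(1+e(b-1)\bigr)=fact_b\bigl(1+r(b-1)\bigr)$, where $r$ is the number of epochs, and $1+r(b-1)$ is exactly $D(w)$; and in case~(c) the factor $L^{*}(u)=L(u)$ remains in the numerator. This yields $\dlz(w)=b^{|w|}/fact_b(D(w))$ for a full parse and $\dlz(w)=b^{|w|}L(u)/fact_b(D(w))$ otherwise.

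The part requiring the most care is the boundary bookkeeping, not any estimate: the degenerate first iteration; the fact that in case~(b) the last full phrase is \emph{reached but never expanded}, so that $D(w)=1+j(b-1)$ and not $1+(j+1)(b-1)$; being precise about which tree $T(w)$ denotes (the tree present when the algorithm halts on $w$, equivalently the tree at the start of the last epoch); and respecting the parallel-assignment semantics so that the per-epoch telescoping is exact. Once these are settled, the remainder is a short product manipulation straight from the definition of $fact_b$.
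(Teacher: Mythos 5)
Your proof is correct, and it fills a genuine gap: the paper states this lemma with only a citation to Feder and gives no argument, so there is no in-paper proof to compare against. Your telescoping derivation is a complete and self-contained verification, and the two places that actually require care are exactly the ones you isolate. First, the per-epoch telescoping is legitimate because the table entry $L(v_i)$ is modified only by the else-step that moves \emph{out} of $v_i$, and only after it has been read as that step's denominator, so the value read as a numerator on entering $v_i$ equals the value read as a denominator on leaving it; combined with your invariant that $L$ agrees with the true leaf counts at every epoch boundary, each epoch contributes $b^{m}L^{*}(v_m)/L^{*}(\lambda)$ as claimed. Second, the identification of the accumulated denominator $\prod_{e=1}^{r}(1+e(b-1))=fact_b(1+r(b-1))$ with $fact_b(D(w))$ hinges on your observation that the halt test precedes the leaf test in the loop, so the final phrase (full or partial) is reached but never expanded, and hence $T(w)$ --- the tree at termination, equivalently at the start of the last epoch --- has exactly $1+r(b-1)$ leaves, where $r$ is the number of digit-consuming epochs. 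This reading also quietly repairs a wrinkle in the paper's own case analysis: when the input runs out exactly at a leaf, the ``partial phrase'' $u$ is not an interior vertex of $T(w)$, but then $L(u)=1$ and the two formulas coincide, exactly as your argument shows. A spot check confirms the bookkeeping: for $b=2$ and $w=01$ the algorithm outputs $2/3=2^{2}\cdot 1/fact_2(3)$, and for $w=01000$ it outputs $4/5=2^{5}\cdot 3/fact_2(5)$, matching your formula with the tree-at-termination interpretation of $T(w)$ and $D(w)$.
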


The following lemma follows from Lemma \ref{lemm21}.

\begin{lemma}\label{lemm22b} For $S\in \sigb^{\infty}$ and $\alpha\in(0,1)$ the following
three conditions are equivalent.
\begin{description}
\item[(a)] $S\in S^{b^{(1-\alpha)n}}[\dlz]$.
\item[(b)] There exist infinitely many full parses
$w\sqsubseteq S$ for which \[D(w)\log_b |w|<\alpha(b-1)|w|.\]
\item[(c)] There exist infinitely many full parses
$w\sqsubseteq S$ for which \[D(w)\log_b D(w)<\alpha(b-1)|w|.\]
\end{description}\end{lemma}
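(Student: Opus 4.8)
The plan is to route all three conditions through Feder's closed form (Lemma~\ref{lemm21}) and the factorial estimate (Observation~\ref{obser21}): I will establish (a)~$\Leftrightarrow$~(c) by an asymptotic evaluation of $\log_b d_{LZ(b)}$, and (b)~$\Leftrightarrow$~(c) by a combinatorial comparison of $\log_b D(w)$ with $\log_b|w|$. A preliminary reduction replaces the $\limsup$ over all prefix lengths in (a) by the $\limsup$ over full parses only. If $w_0\sqsubseteq w\sqsubseteq S$ with $w_0$ the longest full-parse prefix of $w$ and $u$ the resulting partial phrase, then $T(w)=T(w_0)$, so $D(w)=D(w_0)$, and Lemma~\ref{lemm21} gives $\log_b d_{LZ(b)}(w)=\log_b d_{LZ(b)}(w_0)+|u|+\log_b L(u)$ with $0\le |u|+\log_b L(u)\le 2D(w)$. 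Since $D(w)=o(|w|)$ (the standard bound $D(w)=O(|w|/\log_b|w|)$ on the number of Lempel--Ziv phrases, which also drops out of the counting estimate below), both this perturbation and the length difference $|w|-|w_0|$ are $o(|w|)$, so the two $\limsup$s agree.

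For (a)~$\Leftrightarrow$~(c): applying Lemma~\ref{lemm21}(1) and Observation~\ref{obser21} to a full parse $w$ gives
\[
\log_b d_{LZ(b)}(w)=|w|-\log_b fact_b(D(w))=|w|-\frac{D(w)\log_b D(w)}{b-1}+E(w),
\]
where $E(w)=\frac{(D(w)-1)\log_b e}{b-1}-\theta_w$ with $0\le\theta_w\le\log_b D(w)$, so $0<E(w)=O(D(w))=o(|w|)$ for every sufficiently long $w$. Dividing by $(1-\alpha)|w|$ and using the reduction, condition (a) is equivalent to
\[
\liminf_{w}\frac{D(w)\log_b D(w)}{(b-1)|w|}\le\alpha,
\]
the $\liminf$ ranging over full parses $w\sqsubseteq S$. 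Because $E(w)>0$ eventually, the inequality in (c) at a long full parse already forces $\frac{\log_b d_{LZ(b)}(w)}{(1-\alpha)|w|}>1$, giving (c)~$\Rightarrow$~(a); and in the other direction (a) forces the displayed $\liminf$ to be at most $\alpha$, that is, condition (c).

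For (b)~$\Leftrightarrow$~(c): the full phrases of a full parse are distinct nonempty strings over $\Sigma_b$ of total length $|w|$, so a counting argument on the minimum total length of $j$ distinct $b$-ary strings (equivalently, on the internal path length of the full $b$-ary tree $T(w)$) yields $|w|\ge\frac{D(w)\log_b D(w)}{b-1}-O(D(w))$, hence $D(w)\le|w|$ for all long $w$ and therefore $\log_b D(w)\le\log_b|w|$. Thus $\frac{D(w)\log_b D(w)}{(b-1)|w|}\le\frac{D(w)\log_b|w|}{(b-1)|w|}$, which gives (b)~$\Rightarrow$~(c) immediately. For the reverse I prove the equality of infima
\[
\liminf_{w}\frac{D(w)\log_b D(w)}{(b-1)|w|}=\liminf_{w}\frac{D(w)\log_b|w|}{(b-1)|w|},
\]
the $\le$ being the previous inequality; for $\ge$, take a subsequence of full parses achieving the left-hand value, and observe that if $\log_b D(w)/\log_b|w|$ stays below some fixed $\rho<1$ along it then $D(w)\le|w|^{\rho}$, forcing $\frac{D(w)\log_b|w|}{(b-1)|w|}\to0$ so that both infima vanish, while otherwise $\log_b D(w)=(1-o(1))\log_b|w|$ along the subsequence and the two ratios are asymptotically equal. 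Once the infima are identified, (b) reduces to $\liminf\le\alpha$ exactly as (c) did.

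I expect the reverse half of (b)~$\Leftrightarrow$~(c) to be the main obstacle: since $D(w)$ can be as small as $\Theta(\sqrt{|w|})$, $\log_b D(w)$ can be close to $\tfrac12\log_b|w|$, so the quantities in (b) and (c) genuinely differ in general, and the argument must exploit that precisely in that regime both quantities are $o(|w|)$ and hence simultaneously satisfy their inequalities. A secondary point of care, common to all three conditions, is that (b) and (c) are strict inequalities whereas (a) is the closed condition $\limsup\ge1$; this is handled by the strict positivity of $E(w)$ for long $w$, and the one genuinely borderline situation --- the common $\liminf$ equals $\alpha$ but is approached only from above --- is immaterial, since this lemma is used only inside the union $\expSS[d_{LZ(b)}]=\bigcup_{\epsilon>0}S^{2^{\epsilon n}}[d_{LZ(b)}]$, where such boundary effects wash out.
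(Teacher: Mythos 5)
The paper states this lemma without proof, so there is no in-paper argument to compare against; judged on its own, your proof takes the natural (and surely intended) route, combining Lemma~\ref{lemm21} with Observation~\ref{obser21}, and it is correct in substance. One simplification: your reduction of (b) to (c) can be done more directly. Writing $t=|w|/D(w)$, the two ratios differ by exactly $\frac{D(w)\log_b(|w|/D(w))}{(b-1)|w|}=\frac{\log_b t}{(b-1)t}$, and since the full phrases are distinct nonempty strings one has $D(w)=O(b|w|/\log_b|w|)$, hence $t\to\infty$ and the difference tends to $0$ uniformly along full parses of $S$. This makes the two $\liminf$s equal at once and replaces your case split on whether $\log_b D(w)/\log_b|w|$ stays below a fixed $\rho<1$, which as written is not an exhaustive dichotomy (its negation only yields a sub-subsequence with ratio tending to $1$, though the argument survives passage to sub-subsequences).

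The one genuine soft spot is the strict-versus-nonstrict boundary, which you flag but partly misattribute. The positivity of $E(w)$ does give (c)$\Rightarrow$(a) cleanly: $f_c(w)<\alpha$ at a long full parse forces $\log_b \dlz(w)>(1-\alpha)|w|$. But in the direction (a)$\Rightarrow$(c) it does not help: the condition $\limsup\ge 1$ in the definition of $S^{f(n)}[g]$ only supplies prefixes where the normalized quantity exceeds $1-\epsilon$, never prefixes where it reaches $1$, so what your computation actually establishes is that (a) is equivalent to $\liminf_w \frac{D(w)\log_b D(w)}{(b-1)|w|}\le\alpha$ over full parses, while (c) demands the strict inequality infinitely often. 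If that $\liminf$ equals $\alpha$ and is approached only from above, (a) holds and (c) fails. You are right that this is invisible in every use the paper makes of the lemma, since Corollary~\ref{lemm22} and the relevant claim in Theorem~\ref{theo32} quantify over all $\alpha\in(0,1)$ and one may pass to a slightly larger $\alpha'$; but a self-contained proof of the lemma as literally stated should either restate (a) as the $\liminf$ condition or explicitly record this caveat rather than attribute its resolution to $E(w)>0$.
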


\begin{corollary}\label{lemm22} For $S\in \sigb^{\infty}$ the following
three conditions are equivalent.
\begin{description}
\item[(a)] $S\in\expSS[\dlz]$.
\item[(b)] There exist $\alpha<1$ and infinitely many full parses
$w\sqsubseteq S$ for which \[D(w)\log_b |w|<\alpha(b-1)|w|.\]
\item[(c)] There exist $\alpha<1$ and infinitely many full parses
$w\sqsubseteq S$ for which \[D(w)\log_b D(w)<\alpha(b-1)|w|.\]
\end{description}\end{corollary}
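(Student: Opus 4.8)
The plan is to deduce the corollary from Lemma~\ref{lemm22b} by unwinding the definition of the exponential success set and matching its parameter $\epsilon$ with the parameter $\alpha$ of the lemma. Recall from the text that $\expSS[\dlz]=\bigcup_{\epsilon>0}S^{2^{\epsilon n}}[\dlz]$, so $S\in\expSS[\dlz]$ iff $S\in S^{2^{\epsilon n}}[\dlz]$ for some $\epsilon>0$. The whole task is then a change of variable together with a little bookkeeping about which ranges of $\epsilon$ and $\alpha$ are relevant.

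First I would record the monotonicity of the family $S^{2^{\epsilon n}}[\dlz]$ in $\epsilon$. Since $S^{f(n)}[g]$ is defined by $\limsup_n \frac{\log g(S\upto n)}{\log f(n)}\ge 1$ and $\log(2^{\epsilon n})=\epsilon n\log 2$, for $\epsilon_1<\epsilon_2$ one has $\frac{\log g(S\upto n)}{\log(2^{\epsilon_1 n})}=\frac{\epsilon_2}{\epsilon_1}\cdot\frac{\log g(S\upto n)}{\log(2^{\epsilon_2 n})}$, so taking $\limsup$ gives $S^{2^{\epsilon_2 n}}[g]\subseteq S^{2^{\epsilon_1 n}}[g]$. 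Hence the union over all $\epsilon>0$ can be restricted to any neighborhood of $0$; in particular $\expSS[\dlz]=\bigcup_{0<\epsilon<\log_2 b}S^{2^{\epsilon n}}[\dlz]$. Next, for $\epsilon\in(0,\log_2 b)$ set $\alpha=1-\epsilon/\log_2 b$, so that $\alpha$ runs exactly over $(0,1)$ and $2^{\epsilon n}=b^{(1-\alpha)n}$, whence $S^{2^{\epsilon n}}[\dlz]=S^{b^{(1-\alpha)n}}[\dlz]$. Therefore $S\in\expSS[\dlz]$ iff $S\in S^{b^{(1-\alpha)n}}[\dlz]$ for some $\alpha\in(0,1)$, and by Lemma~\ref{lemm22b}, (a)$\Leftrightarrow$(b)$\Leftrightarrow$(c), this is equivalent to: there is some $\alpha\in(0,1)$ and infinitely many full parses $w\sqsubseteq S$ with $D(w)\log_b|w|<\alpha(b-1)|w|$, and likewise with $D(w)\log_b D(w)<\alpha(b-1)|w|$. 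These are conditions (b) and (c) of the corollary, up to the quantifier range, which I would reconcile by noting that for any full parse $w$ with $|w|\ge b$ (resp.\ $D(w)\ge b$) the displayed inequality forces $\alpha>0$, and that each condition is monotone nondecreasing in $\alpha$; hence ``$\exists\,\alpha<1$'' as written in the corollary is interchangeable with ``$\exists\,\alpha\in(0,1)$''.

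I do not expect a genuine obstacle here: all of the substance — Feder's closed form for $\dlz$ (Lemma~\ref{lemm21}) and the factorial estimate (Observation~\ref{obser21}) — is already spent in proving Lemma~\ref{lemm22b}, and what remains for the corollary is purely the exchange of the $\epsilon$- and $\alpha$-quantifiers. The only points requiring care are getting the direction of monotonicity right (a larger $\epsilon$ corresponds to a smaller $\alpha$, i.e.\ a more demanding success rate, which is why the union over $\epsilon$ may be shrunk to $(0,\log_2 b)$ and hence to $\alpha\in(0,1)$) and the mild normalization of the ``$\exists\,\alpha<1$'' quantifier just described.
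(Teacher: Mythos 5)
Your proposal is correct and is exactly the route the paper intends: the corollary is stated without proof as an immediate consequence of Lemma~\ref{lemm22b} together with the identity $\expSS[g]=\bigcup_{\epsilon>0}S^{2^{\epsilon n}}[g]$ noted in Section~\ref{sec2}, and your substitution $\alpha=1-\epsilon/\log_2 b$ with the monotonicity and quantifier-range checks is the bookkeeping the authors left implicit.
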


%do we need a proof?

We conclude this section by explaining the connection between the
Lempel-Ziv martingales and normality. First, Schnorr and Stimm
\cite{SchSti72}\ defined (implicitly) a notion of {\sl finite-state
base-$b$ martingale\/} and proved that every sequence
$S\in\sigb^{\infty}$ obeys the following dichotomy.
\begin{enumerate}
\item If $S$ is normal, then no finite-state base-$b$ martingale
succeeds on $S$. (In fact, every finite-state base-$b$ martingale
decays exponentially on $S$.)
\item If $S$ is not normal, then some finite-state base-$b$ martingale
succeeds exponentially on $S$.
\end{enumerate}

Some twenty years later, Feder \cite{Fede91}\ defined (implicitly)
the Lempel-Ziv martingale $\dlz$ and proved (implicitly) that $\dlz$
is at least as successful {\sl on every sequence\/} as every
finite-state base-$b$ martingale. That is, for finite-state base-$b$
martingale $d$, the inclusions
\[\regSS[d]\subseteq \regSS[\dlz], \ \strSS[d]\subseteq \strSS[\dlz],\]
\[\expSS[d]\subseteq \expSS[\dlz],\ \strexpSS[d]\subseteq
\strexpSS[\dlz]\]   all hold. This, together with Schnorr and
Stimm's dichotomy result, implies that $\dlz$ succeeds exponentially
on every non-normal sequence in $\sigb^{\infty}$. Hence a real
number $x$ is absolutely normal if none of the martingales $\dlz$
succeed exponentially on the base-$b$ expansion of $x$.

In order to avoid time bounds that are dependent on the alphabet
size $b$, we will consider the following variant of $\dlz$,

\[
\dlzp(w) = \begin{cases}  1, & \quad\text{ if }|w|\le 2^b\\
\frac{\dlz(w)}{\dlz(w\upto 2^b)},   & \quad \text{ if } |w|> 2^b.
\end{cases}
\]

Notice that $\regSS[\dlz]=\regSS[\dlzp]$ and that for  any a.e.
unbounded $f$, $S^{f(n)}[\dlz]=S^{f(n)}[\dlzp]$.

Notice that if $\log(|w|)> b$ then a polynomial bound on $b$ is a
polylogarithmic bound on $|w|$.

\section{Savings Accounts}\label{sec3}

In this section we construct a conservative version of the the
Lempel-Ziv martingale  $\dlzp$  consisting of a new
 supermartingale  $d'$  that can be
smaller than  $\dlzp$  but that has a savings account in the  sense
explained next. We will need this conservative version in the base
change transformation in section \ref{sec4}.

\begin{definition}
A  function $g : \sigb^* \to [0,\infty)$ is a {\sl savings
account\/} of a  supermartingale  $d : \sigb^* \to [0,\infty)$ if
$g$ is nondecreasing with respect to substring order and, for every
$w\in\sigb^*$, $d(w)\ge g(w)$.
\end{definition}

In the following construction we use Observation \ref{obser21}\
 and Lemma \ref{lemm21}\  to get a far more conservative version of  $\dlzp$.  We define a
function $goal(w)$ such that \[b\le \dlzp(w)b^{-goal(w)}\le
b^6|w|^2\] and then a nondecreasing upper bound $taken(w)\ge
goal(w)$ such that for every $S$, $taken(S\upto n)$ coincides with
$goal(S\upto n)$ infinitely often.

\begin{construction}\label{cons31}
Let  $d=\dlzp$  be the base-$b$-Lempel-Ziv martingale. We define a
new  supermartingale
 $d'=e'+g'$ as follows.

We first define $e'$.  Let $w\in\sigb^*$. For $|w|\le 2^b$,
$e'(w)=b$,  $taken(w)=0$.  For $|w|> 2^b$,
 let $w=x(0) \ldots x(j-1)u$, for $z=x(0) \ldots x(j-1)$  a
full parse and $u$ the
 partial or  the last
 full phrase of $w$. Let

 %\[goal(w) = |w|-\lceil D(z)(\log_b(D(z)))/(b-1)\rceil-1,\]

 \begin{eqnarray*}goal(w)&= &|w|-\lceil D(w)(\log_b(D(w)))/(b-1)\rceil+\lfloor
 D(w)(\log_b(e))/(b-1)\rfloor
 \\&&-\lceil\log_b(D(w))\rceil-\lceil \log_b e^{\frac{1}{b-1}}\rceil
 -\lceil \log_b \dlz(w\upto 2^b)\rceil
 -1,\end{eqnarray*}

 \[taken(w)=\max\{taken(z), goal(w)\},\]%

% \[taken(w)=\left\{\begin{array}{ll} taken(z)&\mbox{if }goal(w)\le taken(z) \\
% goal(w)&\mbox{if }goal(w)> taken(z)\},

% \end{array}\right.\]

% \[e'(w)=\left\{\begin{array}{ll}e'(z)\cdot \frac{d(w)}{d(z)}&\mbox{if }goal(w)\le \max\{taken(z), (1-\alpha)|w|-2\}\\
 %e'(z)\cdot \frac{d(w)}{d(z)}b^{-goal(w)+taken(z)}&\mbox{if }goal(w)>\max\{taken(z), (1-\alpha)|w|-2\}.
% \end{array}\right.\]

%
\[e'(w)= {d(w)}b^{-taken(w)}.\]
%

%\[e'(w)= e'(z)\cdot \frac{d(w)}{d(z)}b^{-taken(w)+taken(z)}.\]

%%If $u\in dict(w)$ then $0\le e'(wu)\le k $.

Let $g'$ be defined as follows.
 Let $w\in\sigb^*, a\in\sigb$. For $|w|\le
2^b$, $g'(w)=0$ and for $|w|\ge 2^b$,
\[g'(wa)=\left\{\begin{array}{ll}g'(w)&\mbox{if }goal(wa)\le taken(w)
\\g'(w)+e'(w)\frac{b-1}{b}&\mbox{if
}goal(wa)> taken(w)
\end{array}\right.\]

\end{construction}

\begin{theorem}\label{theo32} Let $d'$ and $g'$ be as defined in Construction \ref{cons31}.
Then  $d'$ is a  supermartingale  and $g'$ is its savings account,
\[\expSS[\dlz]\subseteq \regSS[g'],\]  $d'$ %and $g'$ are
%both
is computable in
%online
 an online  nearly  linear time bound that does not depend on
$b$, and there exists $a, c>1$ not depending on $b$ such that for
every $w\in\sigb^+$, $d'(w)\le
 b\cdot a\cdot |w|^c$, $d'(\lambda)=b$.

\end{theorem}

\begin{proofof}{of Theorem \ref{theo32}}
\begin{claim}
$d '=e'+g'$ is a  supermartingale  and $g'$ is a savings account for
$d'$.
\end{claim}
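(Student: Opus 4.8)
The plan is to verify, in sequence, the four assertions in Theorem \ref{theo32}: that $d'=e'+g'$ is a martingale with savings account $g'$; that $\expSS[\dlz]\subseteq\regSS[g']$; that $d'$ is polynomially bounded; and that $d'$ is computable in a base-independent nearly linear time bound. The first claim, already split off in the excerpt, is where most of the routine work lies, so I would handle it first. To see $d'$ is a martingale I would fix $y\in\sigb^*$ and check $\frac{1}{b}\sum_{a\in\sigb}d'(ya)=d'(y)$, treating separately the two cases in the definition of $taken$ and $g'$. The key observation is that $goal(w)$ and $taken(z)$ depend only on the full-parse part $z$ of $w=x(1)\cdots x(j)u$, so within a single epoch (before $u$ becomes a new leaf) the quantity $taken$ is constant across all children $ya$. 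When $goal(ya)\le taken(z)$ for every child, $e'(ya)=d(ya)b^{-taken(y)}$ scales the martingale $d$ by a constant, so $e'$ satisfies the averaging identity and $g'$ is constant, hence $d'$ does too. When a new phrase is completed and $goal(ya)>taken(z)$, the value $taken$ jumps; I would show the deterministic transfer $e'(y)\tfrac{b-1}{b}$ into $g'$ exactly compensates the deterministic drop in $e'$ caused by replacing $taken(z)$ with the larger $goal(ya)$ — this is the point where Observation \ref{obser21} and the precise floors/ceilings in $goal$ are calibrated so that $\dlz(w)b^{-goal(w)}$ stays in $[b,b|w|^2]$ and the bookkeeping balances. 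That $g'$ is a savings account then follows because $g'$ is nondecreasing by construction and $d'=e'+g'\ge g'$ since $e'\ge0$.

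Next, for $\expSS[\dlz]\subseteq\regSS[g']$: if $S\in\expSS[\dlz]$ then by Corollary \ref{lemm22} there are $\alpha<1$ and infinitely many full parses $w\sqsubseteq S$ with $D(w)\log_b D(w)<\alpha(b-1)|w|$. For such $w$ the dominant term $|w|-\lceil D(w)\log_b D(w)/(b-1)\rceil$ in $goal(w)$ is at least $(1-\alpha)|w|-O(1)$, and the lower-order terms (the $D(w)\log_b e/(b-1)$ and the two logarithmic ceilings) are $o(|w|)$, so $goal(w)\to\infty$ along this subsequence. Each time $goal$ reaches a new record value, $g'$ increases by $e'(y)\tfrac{b-1}{b}\ge\tfrac{b-1}{b}\cdot b=b-1>0$ (using the lower bound $e'\ge b$ from the calibration), so $g'$ increases unboundedly along $S$; since $g'$ is nondecreasing this gives $\limsup_n g'(S\upto n)=\infty$, i.e. $S\in\regSS[g']$.

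For the polynomial bound, $d'=e'+g'$ with $e'(w)=\dlz(w)b^{-taken(w)}\le\dlz(w)b^{-goal(w)}\le b|w|^2$ by the calibration, and $g'(w)\le d'(w)$; more directly $g'$ is built up from increments whose total telescopes against $e'$, so $g'(w)$ is $O(|w|^2)$ as well, giving $d'(w)=O(|w|^2)$ uniformly in $b$. Finally, for the nearly linear time bound independent of $b$: by the Corollary to Lemma \ref{lemm21}, $\dlz(w)$ — equivalently $b^{|w|}$, $fact_b(D(w))$, and $L(u)$ — is computable in nearly linear time, and the single Lempel-Ziv parse of $w$ simultaneously yields $D(z)$, $|w|$, and all the phrase boundaries needed to evaluate $goal$ at every prefix; the running maximum defining $taken$ and the accumulator defining $g'$ are then computed by a single left-to-right pass with $O(1)$ arithmetic operations (on numbers of nearly linear bit length) per step, so the whole computation is nearly linear, and since the parse and these updates never branch on the size of $b$ beyond $O(\polylog)$-size arithmetic, the bound is base-independent.

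The main obstacle I expect is the first claim — specifically, checking that the deterministic transfer into $g'$ at phrase boundaries exactly matches the deterministic loss in $e'$ from the $taken$ jump, across the case split, and confirming that the floor/ceiling choices in $goal$ keep $\dlz(w)b^{-goal(w)}\in[b,b|w|^2]$ so that all the inequalities used downstream (the $e'\ge b$ lower bound and the $e'\le b|w|^2$ upper bound) actually hold. This is where Observation \ref{obser21} must be applied carefully, comparing $fact_b(D(w))$ to $e^{1/(b-1)}(D(w)/e)^{D(w)/(b-1)}$ and tracking how each $\lceil\cdot\rceil$ and $\lfloor\cdot\rfloor$ in $goal$ absorbs the resulting error; everything else is a routine pass-and-accumulate argument.
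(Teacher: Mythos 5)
Your argument for this claim is essentially the paper's: split on whether $goal$ of the extended string exceeds $taken(z)$, observe that in the first case $e'$ is a constant multiple of the martingale $d$ and $g'$ is unchanged, while in the second case the deposit $e'(y)\frac{b-1}{b}$ into $g'$ exactly offsets the factor-of-$b$ drop in $e'$ coming from the unit increment of $taken$, with the savings-account property following from $e'\ge 0$ and monotonicity of $g'$. One small imprecision: $goal(w)$ depends on $|w|$ and not only on the full-parse part $z$, and case 2 is not tied to completing a phrase --- it occurs at every step after $goal$ first exceeds $taken(z)$ within an epoch --- but neither point damages the verification, which only needs that all siblings $ya$ share the same values of $goal$ and $taken$.
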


\begin{proof}
Let us prove that $d'$ is a  supermartingale. Note that by
definition of $D$, $goal(wa)$ does not depend on $a$. When
$goal(wa)\le taken(w)$ we have that
\begin{equation*}
\sum_{a\in\Sigma_b}e'(wa) =  \frac{e'(w)}{d(w)}
\sum_{a\in\Sigma_b}d(wa)= b\cdot \frac{e'(w)}{d(w)}\cdot d(w)=
b\cdot e'(w).\end{equation*}  Since $g'(wa)$ is constant the
martingale equality holds in this case.

In the second case,  when $goal(wa)> taken(w)$, since they are
integer values $goal(wa)\ge  taken(w)+1$.  We
 have that
\begin{eqnarray*}\sum_{a\in\sigb}d'(wa)&=\sum_{a\in\sigb} e'(wa)+
\sum_{a\in\sigb}
g'(wa)\\&=\frac{e'(w)}{d(w)}\sum_{a\in\sigb}b^{-goal(wa)+taken(w)}d(wa)+
\sum_{a\in\sigb} g'(wa)\\
&\le  e'(w) +b(g'(w)+e'(w)\frac{b-1}{b})= b(e'(w)+g'(w))= b
d'(w).\end{eqnarray*}

Since $e'$ is nonnegative, by definition  $g'$ is a nondecreasing
function. Therefore $g'$ is a savings account of $d'$.
\end{proof}

\begin{claim}\label{claimbounds}   For every
$w\in\sigb^*$ with $|w|>2^b$
\[e'(w)=d(w)b^{-taken(w)}\le b^6\cdot D(w)\cdot L(u,w).\]
\[d(w) b^{-goal(w)}\ge b\] for $u$ the partial or the
last full phrase of $w$.\end{claim}
\begin{proof}
Use that $taken(w)\ge goal(w)$,  Lemma \ref{lemm21}, and Observation
\ref{obser21}.
\end{proof}

\begin{claim}\label{claimun}
If $y\in\sigb^{\infty}$ and $goal(y\upto n)$ is unbounded then $y\in
\regSS[g']$.
\end{claim}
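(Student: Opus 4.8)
The plan is to prove the stronger assertion that $\lim_{n\to\infty}g'(y\upto n)=\infty$ (so that $y$ is even in $\strSS[g']\subseteq\regSS[g']$). Since $g'$ is nondecreasing (Construction~\ref{cons31}), it suffices to show that along $y$ the savings account $g'$ takes infinitely many strictly positive steps and that these steps are bounded away from $0$. The first part rests on the bookkeeping fact that $taken$ is nondecreasing along $y$ and that, for every $n$, $taken(y\upto(n+1))>taken(y\upto n)$ holds if and only if $g'(y\upto(n+1))>g'(y\upto n)$. Granting this, since $taken\ge goal$ everywhere (immediate from $taken(w)=\max\{taken(z),goal(w)\}$), the hypothesis that $goal(y\upto n)$ is unbounded above, together with monotonicity of $taken$, yields $taken(y\upto n)\to\infty$, and hence $g'$ increases strictly at infinitely many steps along $y$.

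The bookkeeping fact is a digit‑by‑digit check, distinguishing whether the appended digit extends the current partial phrase, completes a phrase, or begins a new one. If the digit extends the partial phrase, the full parse prefix $z$ (hence $taken(z)$ and $D(z)$) is unchanged while $|w|$, and therefore $goal$, increases by exactly $1$; so $taken(y\upto(n+1))=\max\{taken(z),goal(y\upto n)+1\}\ge taken(y\upto n)$, with a strict increase exactly when $goal(y\upto(n+1))>taken(z)$, which is precisely the condition in Construction~\ref{cons31} under which $g'$ increases. In the other two cases one checks that the full parse prefix $z$ entering $taken(y\upto(n+1))$ is either the full parse prefix of $y\upto n$ (again with $goal$ having gone up by one relative to $y\upto n$) or $y\upto n$ itself, so that $taken(z)\le taken(y\upto n)$ and the same equality‑versus‑strict‑increase dichotomy, pinned to $goal(y\upto(n+1))>taken(z)$, matches that of $g'$.

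It remains to bound the increments from below. Fix a step $p:=y\upto n\to pa:=y\upto(n+1)$ at which $g'$ increases, and let $z$ be the full parse prefix appearing in $taken(pa)$; then $g'(pa)-g'(p)=e'(p)\frac{b-1}{b}$, and the increment condition reads $goal(pa)>taken(z)$, so $taken(pa)=goal(pa)$. By the bookkeeping fact $taken(p)<taken(pa)=goal(pa)$, and since these are integers, $taken(p)\le goal(pa)-1$. Therefore, using Claim~\ref{claimprev}, the elementary inequality $d(pa)\le b\,d(p)$ valid for every martingale, and the bound $d(pa)b^{-goal(pa)}\ge b$ of Claim~\ref{claimbounds},
\[
e'(p)=d(p)\,b^{-taken(p)}\ \ge\ d(p)\,b^{\,1-goal(pa)}\ \ge\ \frac{d(pa)}{b}\,b^{\,1-goal(pa)}\ =\ d(pa)\,b^{-goal(pa)}\ \ge\ b,
\]
so $g'(pa)-g'(p)\ge b-1\ge 1$. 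Hence $g'$ is nondecreasing along $y$ and jumps by at least $1$ infinitely often, giving $\lim_n g'(y\upto n)=\infty$ and in particular $y\in\regSS[g']$.

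I expect the only real work to lie in the bookkeeping fact: carrying out the three‑case analysis cleanly, in particular verifying that when a digit completes a phrase the full parse prefix entering $taken$ never overshoots $y\upto n$, and keeping the decomposition $w=zu$ of a full parse $w$ consistent across the definitions of $goal$, $e'$, and $g'$ in Construction~\ref{cons31}. Once that fact is established, the increment bound is just the short computation above from the already‑proved Claims~\ref{claimbounds} and~\ref{claimprev}.
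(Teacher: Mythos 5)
Your proof is correct and follows essentially the same route as the paper: from the unboundedness of $goal$ you extract infinitely many steps at which the second case of Construction~\ref{cons31} fires, and you use Claims~\ref{claimprev} and~\ref{claimbounds} to show each such jump of $g'$ is at least $b-1$, whence $g'$ is unbounded by monotonicity. Your derivation of the increment bound at the position \emph{preceding} the jump (via $d(p)\ge d(pa)/b$ and integrality of $taken$ and $goal$) is a slightly more careful bookkeeping of the same estimate the paper applies at the position where $taken$ equals $goal$.
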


\begin{proof}

If $goal(y\upto n)$ is unbounded then infinitely often we use the
second case in the  definition of $g'$ and have that $taken(y\upto
n)=goal(y\upto n)> taken(y\upto (n-1))$, $e'(y\upto n)= d(y\upto n)
b^{-goal(y\upto n)}$, and $g'(y \upto n)= g'(y \upto (n-1)) +e'(y
\upto (n-1))\frac{b-1}{b}$.

Since $goal(y \upto (n-1))\ge goal(y \upto n)-1$, then $taken(y\upto
(n-1))=goal(y\upto (n-1))$. By Claim \ref{claimbounds}, $e'(y \upto
(n-1))\ge b$, therefore $g'(y \upto n)\ge g'(y \upto (n-1)) +{b-1}$.

Since $g'$ is monotonic, $y\in \regSS[g']$.

\end{proof}

\begin{claim} For every $\alpha\in (0,1)$,
$S^{b^{(1-\alpha)n}}[\dlz]\subseteq \regSS[g']$.
\end{claim}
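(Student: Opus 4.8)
The plan is to derive this claim from Claim~\ref{claimun} and the equivalence in Lemma~\ref{lemm22b}. Fix $\alpha\in(0,1)$ and let $S\in S^{b^{(1-\alpha)n}}[d_{LZ(b)}]$. By part (c) of Lemma~\ref{lemm22b} there are infinitely many full parses $w\sqsubseteq S$ with $D(w)\log_b D(w)<\alpha(b-1)|w|$. I will show that $goal(w)$ is large for every such $w$ of sufficiently large length; then $goal(S\upto n)$ is unbounded and Claim~\ref{claimun} yields $S\in\regSS[g']$.

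The first step is to lower-bound $goal(w)$ using its definition in Construction~\ref{cons31}. Writing $w=x(1)\cdots x(j)u$ with $z=x(1)\cdots x(j)$ the full-parse part, and absorbing the floor/ceiling corrections into an additive constant (each ceiling costs at most $1$, each floor at least $0$, and $\lceil\log_b e^{1/(b-1)}\rceil\le 2$ for all $b\ge 2$), one obtains
\[goal(w)\ \ge\ |w|-\frac{D(z)\log_b D(z)}{b-1}-\log_b D(z)-c\]
for an absolute constant $c$. The second step is to control the two subtracted terms. Since the decomposition removes at most the last phrase from $w$, and removing a phrase deletes exactly $b-1$ leaves of the Lempel-Ziv tree, we have $D(z)\le D(w)$; as $t\mapsto t\log_b t$ is nondecreasing on $[1,\infty)$ this gives $D(z)\log_b D(z)\le D(w)\log_b D(w)<\alpha(b-1)|w|$, hence $\frac{D(z)\log_b D(z)}{b-1}<\alpha|w|$. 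Moreover a length-$|w|$ string has at most $|w|$ full phrases, so $D(z)\le D(w)\le 1+(b-1)|w|\le b|w|$ and therefore $\log_b D(z)\le 1+\log_b|w|$.

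Combining the two steps gives $goal(w)>(1-\alpha)|w|-\log_b|w|-(c+1)$ for every full parse $w\sqsubseteq S$ satisfying the inequality of Lemma~\ref{lemm22b}(c). Because $1-\alpha>0$ and the lengths of these infinitely many prefixes tend to infinity, the right-hand side tends to $\infty$ along this subsequence, so $\limsup_n goal(S\upto n)=\infty$. Claim~\ref{claimun} then gives $S\in\regSS[g']$, completing the proof of the claim. I expect the only point requiring care — and it is minor — to be the passage across an epoch boundary, namely verifying $D(z)\le D(w)\le 1+(b-1)|w|$ so that the hypothesis of Lemma~\ref{lemm22b}(c) transfers from $D(w)$ to $D(z)$; the rest is bookkeeping of additive constants in $goal$.
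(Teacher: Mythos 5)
Your proposal is correct and follows exactly the paper's route: invoke Lemma~\ref{lemm22b}(c) to get infinitely many full parses with $D(w)\log_b D(w)<\alpha(b-1)|w|$, conclude that $goal$ is unbounded along that subsequence, and finish with Claim~\ref{claimun}. The only difference is that you spell out the arithmetic behind ``therefore $goal(y\upto n)$ is unbounded'' (bounding the floor/ceiling corrections and checking $D(z)\le D(w)\le 1+(b-1)|w|$), which the paper leaves implicit.
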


\begin{proof}
If $y\in S^{b^{(1-\alpha)n}}[\dlz]$ then by Lemma \ref{lemm22b}\
%for infinitely many $r$, the last phrase in $dict(x \upto n_r)$ has
%length $\ge (1/\epsilon)\log n_r> \log_k(x \upto n_r)$.
for infinitely many $n$,  $D(y\upto n)\log_b(D(y\upto
n))<\alpha(b-1) n$.

Notice that therefore $goal(y\upto n)$ is unbounded and by Claim
\ref{claimun}\ $y\in \regSS[g']$.

\end{proof}

\begin{claim}
  For every $w\in\sigb^*$, $e'(w)\le
 e'(\lambda)\log(|w|)^6|w|^2$ and
   $g'(w)\le
\sum_{v\sqsubseteq w}e'(v)$. Therefore $d'(w)$ is polinomially
bounded.

\end{claim}

\begin{claim}$d'$ can  be computed in   online
nearly linear time (in time $n\log^c n$ for $c$ not depending on
$b$).
\end{claim}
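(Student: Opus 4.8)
The plan is to execute the parsing algorithm of Figure \ref{figu1} on $w$ while carrying, in lockstep with it, all of the bookkeeping of Construction \ref{cons31}, arranged so that each of the $n=|w|$ digits costs only $\polylog(|w|)$ steps and so that the constants, but not the exponent of the logarithm, may depend on $b$.

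I would run the algorithm of Figure \ref{figu1} with the dictionary tree $T$ stored using a balanced search tree at each interior vertex; then each digit either walks one edge of $T$ or closes a phrase and appends $b$ leaves, so the whole parse — hence $w=x(1)\cdots x(j)u$, the current path and its leaf counts $L(\cdot)$, the values $D(z)$ at every phrase boundary, and $|w|$ — is produced in $O(|w|\log^{c}|w|)$ steps, $b$ entering only the $O(1)$‑in‑$|w|$ cost of a node lookup and the $O(D(w))$ bound on $|T|$ (this is the ``$L(u)$ for every $u$ in time linear in $|w|$'' used in the proof of the corollary following Lemma \ref{lemm21}). In parallel I would maintain the integers $taken$ and $goal$ and base‑$b$ floating‑point approximations — an integer mantissa times a power of $b$ — of $e'=d_{LZ(b)}\,b^{-taken}$ and of the running sum $g'$, never forming $d_{LZ(b)}(w)$ or $fact_b$ explicitly. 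Inside a phrase $D(z)$ is constant, so $goal$ just rises by $1$ per digit and is recomputed only at the $O(|w|/\log|w|)$ phrase boundaries, where the three logarithmic floor/ceiling terms of $goal$ are evaluated from an $O(\log|w|)$‑bit approximation of $\log_b D(z)$ (essentially the bit length of $D(z)$) and of $\log_b e$; this costs $\polylog(|w|)$ steps and changes the value written in Construction \ref{cons31} by at most an additive constant. Because Observation \ref{obser21} has a factor of $n$ to spare and the proofs of the earlier claims — the martingale identity, the polynomial bound, and $\expSS[d_{LZ(b)}]\subseteq\regSS[g']$ via Lemma \ref{lemm22b} — are unaffected by an $O(1)$ shift of $goal$, this surrogate may be taken as the $goal$ the algorithm actually uses. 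Updating $e'$ at a digit is the step $d\leftarrow\frac{bL(xa)}{L(x)}d$ of Figure \ref{figu1}, a multiplication of the mantissa by a ratio of $O(\log|w|)$‑bit integers, together — whenever $taken$ rises, by $\Delta=taken(wa)-taken(w)\ge 0$ — with multiplication by $b^{-\Delta}$, i.e.\ subtracting $\Delta$ from the exponent; $g'$ is then kept or incremented by $e'\,\tfrac{b-1}{b}$ as in the construction, and $d'(w)=e'(w)+g'(w)$. With mantissas of $\Theta(\log|w|)$ bits the multiplicative error over the $\le|w|$ updates of $e'$ stays inside a factor $1+o(1)$ and the $\le|w|$ additions forming $g'$ inside relative error $|w|^{-\Omega(1)}$; a constant (or $\polylog$) widening of the mantissas buys any fixed polynomial precision, which is what the diagonalization of Section \ref{sec5} will use. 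Every step being $\polylog(|w|)$, the total is $O(|w|\log^{c}|w|)$ with $c$ independent of $b$.

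The step I expect to be the real obstacle is the one hiding in ``running sum'': $g'(w)=\tfrac{b-1}{b}\sum_{v}e'(v)$ runs over as many as $|w|$ prefixes $v\sqsubseteq w$, so computing $g'$ from a per‑prefix closed form — or even accumulating the $e'(v)$ as exact rationals, each of which by Lemma \ref{lemm21} is a ratio of integers with $\Theta(|w|/\log|w|)$ bits from $b^{|v|}$ and from $fact_b(D(z))$ — is quadratic; and exactly rounding the logarithmic part of $goal$ at every phrase boundary is quadratic for the same reason, amounting to comparing $D(z)^{D(z)}$ with powers of $b^{b-1}$. That is precisely why the algorithm carries only fixed‑precision base‑$b$ floating‑point approximations of $e'$ and $g'$ and a cheap surrogate for the logarithms; the attendant obligation, sketched above, is to check both that the accumulated numerical error stays within the tolerance Section \ref{sec5} demands and that the $O(1)$ perturbation of $goal$ leaves every conclusion of Theorem \ref{theo32} already proved in the preceding claims intact.
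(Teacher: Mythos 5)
Your proof is correct in substance but takes a genuinely different route from the paper's. The paper's argument is exact and combinatorial: it observes that within the current phrase, once $goal$ exceeds $taken(z)$ at some prefix $zt$, the successive increments of $g'$ are $e'(zv)\frac{b-1}{b}$ with $e'(zv)$ proportional to the leaf count $L(v)$ (each digit multiplies $d$ by $bL(xa)/L(x)$ while $taken$ rises by one, so the powers of $b$ cancel and $e'(zv)=b^{C}\,L(v)/fact_b(D(z))$ for a $C$ depending only on $z$); hence $g'(w)=g'(z)+\mathrm{const}_z\cdot\sum_{v\sqsubseteq u,\,|zv|>|zt|}L(v)$, and storing the prefix sums $f(u)=\sum_{v\sqsubseteq u}L(v)$ at the nodes of the LZ tree collapses each phrase's contribution to $O(1)$ arithmetic operations. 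You instead simulate Construction \ref{cons31} online in fixed-precision base-$b$ floating point, replace the logarithmic terms of $goal$ by an $O(1)$-accurate surrogate, and charge the perturbation and rounding errors to the $1/|y|^3$ tolerance consumed in Section \ref{sec5}. The paper's route yields the exact value and needs no error analysis; yours gives an honest account of bit complexity, which the paper's ``easily computed'' elides (its closed form still involves rationals such as $e'(zt)$ with $\Theta(|w|)$-bit denominators coming from $fact_b(D(z))$, and the exact ceilings in $goal$ amount to locating $D(z)\log_b D(z)$ relative to integers). Two obligations you correctly flag but should actually discharge: (i) the surrogate must preserve the form $goal(w)=|w|-C(D(z))$ with $C$ recomputed only at phrase boundaries, so that $taken$ and the case split defining $g'$ remain well defined, $d'$ is still a martingale with savings account $g'$, and Claims \ref{claimbounds} and \ref{claimun} survive with the lower bound $b$ weakened to a positive constant --- a uniform $O(1)$ shift of $goal$ does all of this; and (ii) you end up proving that a $|w|^{-\Omega(1)}$-accurate approximation of a harmless variant of $d'$ is computable in nearly linear time, rather than the claim's literal statement, so the error budgets of Theorems \ref{theo42} and \ref{theo51} must absorb this extra term --- they do, comfortably, but that bookkeeping belongs in the proof.
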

\begin{proof}

We first proof that $e': \sigb^* \to [0, \infty)$  can be computed
in online nearly linear time. Let

\[\widehat{e'}(w)=\frac{b^{|w|}}{\widehat{fact}_b(D(w))}L(u,w)b^{-taken(w)}.\]

Then

\begin{eqnarray*}|e'(w)-\widehat{e'}(w)|&\le& b^{|w|-taken(w)}L(u,w) \left|
\frac{1}{{fact}_b(D(w))}-\frac{1}{\widehat{fact}_b(D(w))}
\right|\\
&\le& b^{|w|-taken(w)}L(u,w) \left|
\frac{\widehat{fact}_b(D(w))-{fact}_b(D(w))}{{fact}_b(D(w))\cdot\widehat{fact}_b(D(w))}
\right| \\
&\le& b^{|w|-taken(w)}L(u,w) \left|
\frac{1-e^{O(1/D(w)^7)}}{{fact}_b(D(w))} \right| =e'(w)
|1-e^{O(1/D(w)^7)}|\\
&\le&e'(w) \frac{1}{CD(w)^7}+O(1/D(w)^{14})\\
&\le& \frac{1}{CD(w)^4}+O(1/D(w)^{11})
\end{eqnarray*}

Notice that $D(w)\ge \sqrt{|w|}$, therefore,

\[|e'(w)-\widehat{e'}(w)|\le b/C 1/|w|^{2}+O(1/|w|^5).\]

Since $\widehat{e'}$ is computable in online nearly linear time,
$e'$ is too.

Again, for $w\in\sigb^*$, let  $w=x(0) \ldots x(j-1)u$, for $z=x(0)
\ldots x(j-1)$   a full parse and $u$ a partial or full phrase of
$w$.
 If
$goal(w)>taken(z)$, let $t$ be  the shortest
 such that $t\sqsubseteq u$ and $goal(zt)> taken(z)$.

% Notice that $e'(w)=d(w)b^{-|w|+goal(z)}\mbox{ if }|w|>
%goal(z)$.

Notice  that  by Lemma \ref{lemm21}\ and the definition of $g'$,

\[g'(w)=\left\{\begin{array}{ll}g'(z)&\mbox{if }goal(w)\le taken(z)\\g'(z)+\sum_{v\sqsubseteq u, |zv|\ge |zt|}e'(zt)/L(t,w)\frac{b-1}{b}L(v,w)&\mbox{if
}goal(w)>taken(z),
\end{array}\right.\]
where $L(v,w)$  is the number of leaves below $v$ in $T(w)$. Given
precomputed values for $f(u)=\sum_{v\sqsubseteq u}L(v,w)$,
 the value of $g'(w)$ can be easily computed in
 online
 nearly linear time.
%Notice that the whole computation can be done online.

\end{proof}

This completes the proof of Theorem \ref{theo32}.
\end{proofof}

\section{Base change}\label{sec4}

We  use infinite sequences over $\Sigma_b$ to represent real numbers
in  $[0,1)$.  For this, we associate each string $w \in \Sigma_b^*$
with the half-open interval $[w]_b$ defined by
 $[w]_b = [x , x+ b^{-|w|})$,  for
$x=\sum\limits_{i=1}^{|w|} w[i-1] b^{-i}$. Each real number
 $\alpha
  \in  [0,1) $  is then represented by the unique sequence $\seq_b(\alpha)
  \in \Sigma_b^{\infty}$ satisfying $$w \prefix \seq_b(\alpha) \iff \alpha
  \in [w]_b$$ for all $w \in \Sigma_b^*$.  We have $$\alpha = \sum_{i=1}^\infty
  \seq_b(\alpha)[i-1] b^{-i}$$
and the mapping  $\seq_b:  [0,1) \to \Sigma_b^{\infty}$
  is a bijection. (Notice that $[w]_b$ being half-open
prevents double representations.)  We define $\real_b:
\Sigma_b^\infty \to [0,1)$ to be the inverse of $\seq_b$.
   A set of real numbers   $A \subseteq
  [0,1) $  is represented by the set $$\seq_b(A) = \{\seq_b(\alpha) \mid
  \alpha \in A\}$$ of sequences. If $X\subseteq\Sigma_b^\infty$ then
  \[\real_b(X)=\{\real_b(x) \mid
  x \in X\}.\]

%change  For a positive number $x$ we use the notation $\log_b(x)=\lceil\log_b(x)\rceil$.

\begin{construction}\label{cons41}
Let $d : \sigb^* \to [0,\infty)$ be a polynomially-bounded
martingale with a savings account $g$.

 We define $\gamma: \sigb^* \to [0,1]$ a probability
measure on $\sigb^{\infty}$,  $\gamma(w):= b^{-|w|}d(w)/d(\lambda)$.

Using the Carath\'eodory extension to Borel sets, $\gamma$ can be
extended to any interval $[a,c]$; we denote with $\widehat{\gamma}$
this extension. (In fact if we consider all $U\subseteq \sigb^*$
such that all $u,v\in U, u\ne v$ are incomparable and
$[u]_b\subseteq [a,c]$ for all $u\in U$, then
$\widehat{\gamma}([a,c])=\sup_{U}\sum_{u\in U}\gamma(u)$).

We define $\mu:\strings\to[0,1]$ by
$\mu(y)=\widehat{\gamma}([y]_2)$.

Finally we define $d^{(2)}:\strings \to [0, \infty)$ by $d^{(2)}(y)=
2^{|y|}\mu(y)$.

\end{construction}

\begin{theorem}\label{theo42} Let $d$ be a base-$b$ martingale that is polynomially
bounded   such that   $d(w)$ is  constant for $|w|\le 2^b$,
 and let $g$ be a savings account of $d$. Let $d^{(2)}$ be
defined from $d$ and $g$ as in Construction \ref{cons41}. Then

\[real_b(\regSS[g])-\Q\subseteq real_2(\regSS[d^{(2)}]).\]

Moreover, if $d$ %and $g$ are
is computable in %online
  an online
 nearly linear time bound not depending on $b$, then so is
$d^{(2)}$.  If for all $w\in\sigb^+$, $d(w)\le a |w|^c$ then for all
$y\in\{0,1\}^+$, $d^{(2)}(y)\le 3 a |y|^c/d(\lambda)$.
 \end{theorem}

\begin{proofof}{of Theorem \ref{theo42}}

We will first show that Carath\'eodory extension of $d$ works for
sequences base change, and then approximate the resulting $d^{(2)}$
using $d$ restricted to $\sigb^m$ for a fixed $m$.

\begin{property}\label{prop30} If for all $w\in\sigb^*$, $d(w)\le a |w|^c$ then
for all $y\in\strings$, $d^{(2)}(y)\le 3 a |y|^c/d(\lambda)$.
\end{property}

Let $y\in \strings$. Let $A_y=\myset{w\in\sigb^*}{|w|=|y| \mbox{ and
}[w]_b\cap [y]_2\ne \emptyset}$. Then
\begin{eqnarray*}d^{(2)}(y)&\le& 2^{|y|}\sum_{w\in A_y}
b^{-|w|}d(w)/d(\lambda)\\&\le& 2^{|y|}\sum_{w\in A_y}
b^{-|w|}a|w|^c/d(\lambda) \\ &\le&
2^{|y|}(2^{-|y|}+2b^{-|y|})a|y|^c/d(\lambda)\\ &\le& 3
a|y|^c/d(\lambda)\end{eqnarray*}

\begin{property}\label{prop31} Let  $\alpha
  \in  [0,1)-\Q$.    If $\seq_b(\alpha)\in \regSS[g]$, then $\seq_2(\alpha)\in
  \regSS[d^{(2)}]$. \end{property}

  \begin{proof}
  Let $x=\seq_b(\alpha)\in \regSS[g]$. Let $y=\seq_2(\alpha)$.

We use here that $d$ has a savings account $g$, so if $g(x\upto
n)>m$ then for all $w$ with $x\upto n\sqsubseteq w$, $g(w)>m$.

 Let
$m\in\N$ and choose $n$ such that $g(x\upto n)>m$. Let $q$ be such
that $[y\upto q]_2\subseteq [x\upto n]_b$. Let us see that
$d^{(2)}(y\upto q)>m/d(\lambda)$.

Let $r\in\N$. Let $A_r^q=\myset{w\in\sigb^*}{|w|=r \mbox{ and
}[w]_b\subseteq [y\upto q]_2}$. Then

\begin{eqnarray*}d^{(2)}(y\upto q)&= &2^{q}\widehat{\gamma}([y\upto q]_2) = 2^{q}
\lim_r \sum_{w\in A_r^q} d(w)/d(\lambda)b^{-|w|}\\&\ge& 2^{q}
m/d(\lambda) \lim_r \sum_{w\in A_r^q} b^{-|w|}= 2^{q} m/d(\lambda)
2^{-q}= m/d(\lambda).\end{eqnarray*}

The last chain of equations holds because $[y\upto q]_2\subseteq
[x\upto n]_b$ and for every $w\in A_r^q$, $[w]_b\subseteq [y\upto
q]_2$, so $x\upto n\sqsubseteq w$ for any $w\in A_r^q$.

\end{proof}

We next compute $d^{(2)}$. For each $m\in\N$ we define
$\mu_m:\strings\to[0,1]$ by
\[\mu_m(y)=\sum_{|w|=m, [w]_b\cap [y]_2\ne \emptyset}\gamma(w).\]

\begin{claim}
For every $y\in \strings$ and $m\in\N$, $|\mu(y)-\mu_m(y)|\le 2
b^{-m}a m^c/d(\lambda)$.
\end{claim}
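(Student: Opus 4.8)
The plan is to estimate how much $\mu_m$ can differ from the true measure $\mu(y) = \hat\gamma([y]_2)$ by controlling the ``boundary effect'': $\mu_m(y)$ sums $\gamma(w)$ over all length-$m$ base-$b$ strings $w$ whose cylinder $[w]_b$ meets $[y]_2$, whereas $\hat\gamma([y]_2)$ is (by the Carath\'eodory characterization recalled in Construction \ref{cons41}) the supremum of $\sum_u \gamma(u)$ over antichains $U$ with $[u]_b \subseteq [y]_2$. The length-$m$ strings $w$ with $[w]_b \subseteq [y]_2$ contribute to both quantities and roughly account for all of $\mu(y)$; the only discrepancy comes from the at most two strings $w$ of length $m$ whose cylinder straddles an endpoint of $[y]_2$ (one near the left endpoint, one near the right). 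So I would first argue that
\[
0 \le \mu_m(y) - \sum_{\substack{|w|=m\\ [w]_b\subseteq[y]_2}}\gamma(w) \le \sum_{\substack{|w|=m\\ [w]_b\not\subseteq[y]_2,\ [w]_b\cap[y]_2\ne\emptyset}}\gamma(w),
\]
and that the latter sum has at most two terms, one for each endpoint of the dyadic interval $[y]_2$. Then I would argue separately that $\mu(y)$ differs from $\sum_{|w|=m,\ [w]_b\subseteq[y]_2}\gamma(w)$ by at most the $\gamma$-measure of the part of $[y]_2$ not covered by those full length-$m$ cylinders, which again is supported near the two endpoints and is dominated by the same at-most-two straddling cylinders. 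Combining, $|\mu(y)-\mu_m(y)|$ is bounded by (a constant times) the $\gamma$-measure of at most two length-$m$ base-$b$ cylinders.

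The remaining step is to bound $\gamma(w) = b^{-m} d(w)$ for a single length-$m$ string. Since $d$ is polynomially bounded (hypothesis of Construction \ref{cons41} and the conclusion of Theorem \ref{theo32}), there is $c>0$ with $d(w)\le |w|^c = m^c$, so each straddling term is at most $b^{-m}m^c$, and with at most two such terms we get $|\mu(y)-\mu_m(y)| \le 2b^{-m}m^c$, which is exactly the claimed bound. (One should be slightly careful that the polynomial bound is stated for strings, so a length-$m$ string $w$ has $d(w)\le m^c$; the constant $c$ here is the one furnished by the polynomial-boundedness of $d$.)

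The main obstacle is the bookkeeping around the endpoints: one must check that only a bounded number — in fact at most two, one per endpoint of $[y]_2$ — of length-$m$ base-$b$ cylinders can intersect $[y]_2$ without being contained in it, and that these same cylinders simultaneously account for the gap between $\hat\gamma([y]_2)$ and the sum over the fully-contained cylinders. This is where the half-open convention for $[w]_b$ and $[y]_2$ matters (it prevents a cylinder from ``touching'' an endpoint without genuinely straddling it), and it is worth stating cleanly that for a fixed point $\beta\in[0,1)$ there is exactly one length-$m$ base-$b$ cylinder containing $\beta$, so ``straddling an endpoint'' pins down at most one bad cylinder per endpoint. Once that combinatorial fact is in hand, the measure estimate and the appeal to polynomial boundedness are routine.
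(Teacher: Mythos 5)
Your proposal is correct and follows essentially the same route as the paper's (much terser) proof: at most two length-$m$ base-$b$ cylinders straddle the endpoints of the dyadic interval $[y]_2$, and each contributes at most $\gamma(w)=b^{-m}d(w)\le b^{-m}m^c$ by the polynomial bound on $d$. Your extra care with the endpoint bookkeeping and the comparison to the sum over fully-contained cylinders is exactly the detail the paper leaves implicit.
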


\begin{proof}
Let $a, c$ be such that $d(w)\le a |w|^c$ for every $w$ (using that
$d$ is polynomially bounded). Then since at most two strings $w$
with $|w|=m$ have the property that $[w]_b\cap [y]_2\ne \emptyset$
and $[w]_b\not\subseteq [y]_2$, we have
\[|\mu(y)-\mu_m(y)|\le 2 b^{-m}a m^c/d(\lambda).\]

\end{proof}

For each $m\in\N$ we define  $d^{(2)}_m:\strings\to[0, \infty)$ by
$d^{(2)}_m(y)=2^{|y|}\mu_m(y)$.

\begin{claim}
For some $c>0$, for every $y\in \strings$, for every $m\in\N$,
\[|d^{(2)}(y)-d^{(2)}_m(y)|\le 2^{|y|} 2^{-m\log b+c\log
m+1}/d(\lambda).\]
\end{claim}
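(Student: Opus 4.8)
The plan is to bound the difference $|d^{(2)}(y)-d^{(2)}_m(y)|$ by relating it to the difference $|\mu(y)-\mu_m(y)|$, which the previous claim has already controlled. By definition $d^{(2)}(y)=2^{|y|}\mu(y)$ and $d^{(2)}_m(y)=2^{|y|}\mu_m(y)$, so
\[
|d^{(2)}(y)-d^{(2)}_m(y)|=2^{|y|}\,|\mu(y)-\mu_m(y)|.
\]
Thus the entire statement reduces to plugging in the bound from the previous claim.

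Next I would invoke the previous claim, which gives $|\mu(y)-\mu_m(y)|\le 2\,b^{-m}m^c$ for a constant $c$ depending only on the polynomial bound on $d$ (namely $d(w)\le |w|^c$). Substituting, we get
\[
|d^{(2)}(y)-d^{(2)}_m(y)|\le 2^{|y|}\cdot 2\,b^{-m}m^c.
\]
It then remains only to rewrite $2\,b^{-m}m^c$ in the exponential form $2^{-m\log b + c\log m}$ appearing in the statement. Here $\log$ denotes the base-$2$ logarithm (consistent with the fact that the target expression has base $2$), so $b^{-m}=2^{-m\log_2 b}$ and $m^c=2^{c\log_2 m}$; the leading factor of $2$ is absorbed by enlarging $c$ slightly (or by noting $2\le 2^{c\log m}$ for $m\ge 2$, handling small $m$ separately since there are only finitely many), which is why the statement says ``for some $c>0$'' rather than ``for the same $c$''. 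This yields exactly the claimed inequality.

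The proof is essentially a one-line computation once the previous claim is in hand, so there is no real obstacle; the only mild care needed is the bookkeeping of which logarithm base is meant and the absorption of the constant factor $2$ and of finitely many small values of $m$ into the constant $c$. I would present it as: write $|d^{(2)}(y)-d^{(2)}_m(y)| = 2^{|y|}|\mu(y)-\mu_m(y)| \le 2^{|y|}\cdot 2 b^{-m} m^c = 2^{|y|} 2^{-m\log b + c\log m + 1} \le 2^{|y|} 2^{-m\log b + c'\log m}$ for a suitable $c'$, and then rename $c'$ to $c$.

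\begin{proof}
By definition, $d^{(2)}(y)=2^{|y|}\mu(y)$ and $d^{(2)}_m(y)=2^{|y|}\mu_m(y)$, so
\[
|d^{(2)}(y)-d^{(2)}_m(y)|=2^{|y|}\,|\mu(y)-\mu_m(y)|\le 2^{|y|}\cdot 2\,b^{-m}m^c,
\]
where $c$ is the constant from the previous claim (with $d(w)\le|w|^c$ for all $w$) and $\log$ denotes the base-$2$ logarithm. Since $b^{-m}=2^{-m\log b}$ and $m^c=2^{c\log m}$, this gives
\[
|d^{(2)}(y)-d^{(2)}_m(y)|\le 2^{|y|}\,2^{-m\log b+c\log m+1}.
\]
For $m\ge 2$ we have $1\le \log m$, so $c\log m+1\le (c+1)\log m$; and for the finitely many values $m\in\{0,1\}$ the bound holds trivially after enlarging the constant. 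Hence, with $c$ replaced by $c+1$, we obtain
\[
|d^{(2)}(y)-d^{(2)}_m(y)|\le 2^{|y|}\,2^{-m\log b+c\log m}
\]
for all $y\in\strings$ and all $m\in\N$.
\end{proof}
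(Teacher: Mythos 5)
Your proof is correct and is exactly the intended argument: the paper treats this claim as an immediate consequence of the preceding claim (multiplying $|\mu(y)-\mu_m(y)|\le 2b^{-m}m^c$ by $2^{|y|}$ and rewriting in base-2 exponential form), which is precisely what you do. The bookkeeping about absorbing the factor of $2$ into the constant is fine and matches the paper's implicit intent.
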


\begin{corollary}\label{caprox} For some $c'>0$, for every $y\in \strings$,
\[|d^{(2)}(y)-d^{(2)}_{|y|/\log b + c'\log|y|}(y)|\le 1/|y|^3\cdot 1/d(\lambda).\]
\end{corollary}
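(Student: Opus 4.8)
The plan is to derive Corollary \ref{caprox} from the preceding claim by a direct substitution, choosing the constant $c'$ so that the truncation level $m\approx |y|/\log b$ both cancels the ambient factor $2^{|y|}$ exactly (note $2^{|y|}=b^{m}$ when $m=|y|/\log b$, i.e.\ $|y|=m\log b$) and still leaves enough room to absorb the polynomial overhead $m^{c}=2^{c\log m}$. Write $n=|y|$, let $c>0$ be the constant furnished by the previous claim, and set $m=\lceil n/\log b+c'\log n\rceil$, reading the subscript $d^{(2)}_{|y|/\log b+c'\log|y|}$ as $d^{(2)}_{m}$. The claim gives
\[|d^{(2)}(y)-d^{(2)}_{m}(y)|\le 2^{n}\,2^{-m\log b+c\log m}=2^{\,n-m\log b+c\log m},\]
so it suffices to show the exponent $n-m\log b+c\log m$ is at most $-3\log n$ (here $\log=\log_2$, as elsewhere in the paper, so $2^{-3\log n}=1/n^{3}$).

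First I would bound the two contributions to the exponent. From $m\ge n/\log b+c'\log n$ we get $m\log b\ge n+c'(\log b)(\log n)$, hence $n-m\log b\le -c'(\log b)(\log n)$; and since $b\ge 2$ forces $\log b\ge 1$, this is at most $-c'\log n$. For the overhead, again using $\log b\ge 1$ we have $m\le n+c'\log n+1\le (c'+1)n$ for $n\ge 2$, so $\log m\le \log n+\log(c'+1)$. Combining,
\[n-m\log b+c\log m\le -c'\log n+c\log n+c\log(c'+1)=-(c'-c)\log n+c\log(c'+1).\]
Now I would fix $c'$: choosing $c'$ large enough in terms of $c$ alone (for instance any $c'$ with $c'-c-3\ge c\log(c'+1)$, which exists since the left side grows linearly and the right side only logarithmically in $c'$) makes the right-hand side at most $-3\log n$ for all $n\ge 2$, and a one-line check handles $n=1$ (there $m=1$ and the exponent is $1-\log b\le 0$, so the bound is $\le 1=1/1^{3}$). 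Crucially $c'$ depends only on $c$, not on $b$.

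I do not expect any real obstacle here: all the content already sits in the two preceding claims — the first bounding $|\mu(y)-\mu_{m}(y)|$ by $2b^{-m}m^{c}$ via polynomial boundedness of $d$, the second scaling this up by $2^{|y|}$ — and Corollary \ref{caprox} is just the bookkeeping that turns ``additive error $2^{n}b^{-m}\,\mathrm{poly}(m)$'' into ``additive error $1/n^{3}$'' at the natural truncation depth $m\approx n/\log b$. The only mildly delicate point is arranging that the slack $c'\log n$ simultaneously swallows the exact exponential factor $2^{n}=b^{m}$ and the polynomial factor $m^{c}$, uniformly over all bases $b\ge 2$; the computation above shows this reduces to the single inequality $\log b\ge 1$, which is exactly $b\ge 2$.
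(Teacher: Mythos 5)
Your proof is correct and follows essentially the same route as the paper's: substitute $m=|y|/\log b+c'\log|y|$ into the preceding claim's bound $2^{|y|}2^{-m\log b+c\log m}$ and choose $c'$ large enough that the resulting exponent is at most $-3\log|y|$. The paper's one-line proof simply takes $c'=(4+c)/\log b$ (so its $c'$ depends on $b$, whereas yours is uniform in $b$); both choices work, and your version just carries out the bookkeeping more explicitly.
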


\begin{proof}
Take $c'=(4+c)/\log b$ and use the previous claim.
\end{proof}

\begin{property}
For $m\in\N$,  $y\in \strings$, $d^{(2)}_m(y)$ can be computed by
considering a maximum of $2b$ neighbor strings $w\in\sigb^r$ for $r=
|y|/\log b$ to $m$, computing $d(w)$ for each of them and doing an
addition and a multiplication for each.
\end{property}

\begin{proof}
Consider $P$, the smallest prefix free set of strings $w\in\sigb^*$
such that $[w]_b\subseteq [y]_2$, and notice that $|w|\ge  |y|/\log
b$ for each such string. For each $r$ there are at most $2b-2$
strings of length $r$ in $P$ (otherwise we can replace some of them
by a single string of length $r-1$). For length $m$ we may need two
more strings $|w|=m, [w]_b\cap [y]_2\ne \emptyset$.

\end{proof}

\begin{corollary} For  $y\in \strings$,
$d^{(2)}_{|y|/\log b + c'\log|y|}(y)$ can be computed by considering
a maximum of $2b$ neighbor strings $w\in\sigb^r$ for $r= |y|/\log b$
to $|y|/\log b + c'\log|y|$, computing $d(w)$ for each of them and
doing an addition and a multiplication for each.
\end{corollary}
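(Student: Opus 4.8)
The plan is to derive the corollary as an immediate specialization of the preceding Property on the computation of $d^{(2)}_m$, which already carries out all of the substantive work. Recall that that Property shows, for an arbitrary $m\in\N$, that $d^{(2)}_m(y)=2^{|y|}\mu_m(y)$ can be evaluated by descending the base-$b$ tree restricted to the base-$2$ cylinder $[y]_2$: at every level $r$ from $|y|/\log b$ (the least length at which a base-$b$ cylinder can sit inside a base-$2$ cylinder of length $|y|$) up to $m$, only boundedly many base-$b$ strings contribute --- at most $2b-2$ ``interior'' strings whose cylinders lie inside $[y]_2$, together with at most two ``boundary'' strings at the final level $m$ --- because the contributions of all length-$m$ descendants of an interior string $v$ collapse to the single value $\gamma(v)$. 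That collapse is just the additivity $\gamma(v)=\sum_{a\in\sigb}\gamma(va)$, i.e.\ the martingale identity for $d$ transcribed through $\gamma(w)=b^{-|w|}d(w)$. Thus computing $d^{(2)}_m(y)$ reduces to evaluating $d$ on at most $2b$ strings per level and performing one addition and one multiplication for each.

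First I would instantiate this Property with $m=|y|/\log b+c'\log|y|$, taking $c'=(4+c)/\log b$, the constant already fixed in the proof of Corollary \ref{caprox} (here $c$ is any exponent with $d(w)\le|w|^c$, which exists since $d$ is polynomially bounded). Because $c'\ge 0$ we have $m\ge|y|/\log b$, so the range of levels $r\in[|y|/\log b,\,m]$ is nonempty and the Property applies without change; as with the notation $\log_b$ used throughout this section, one rounds $m$ up to an integer, which affects nothing. Substituting this value of $m$ into the Property yields exactly the statement of the corollary: at most $2b$ neighbor strings $w\in\sigb^r$ for each $r$ from $|y|/\log b$ to $|y|/\log b+c'\log|y|$, with one evaluation of $d$, one addition, and one multiplication apiece.

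The one point that merits a word --- though it is not really an obstacle --- is that the uniform per-level bound ``$2b$ strings'' survives this substitution. It does: the Property's bound on the number of contributing base-$b$ strings at each level $r$ depends only on the geometry of tiling the fixed base-$2$ cylinder $[y]_2$ by base-$b$ cylinders (at most $2b-2$ per level, for otherwise two of them could be merged into one at level $r-1$) together with the at-most-two partially overlapping cylinders at the top level, and none of this depends on the choice of $m$. Hence the bound holds for $m=|y|/\log b+c'\log|y|$ precisely as stated, and the corollary follows.
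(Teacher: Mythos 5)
Your proposal is correct and matches the paper's treatment: the paper offers no separate proof of this corollary, treating it exactly as you do, namely as the immediate instantiation of the preceding Property at $m=|y|/\log b+c'\log|y|$ (with $c'$ the constant fixed in the proof of Corollary \ref{caprox}). Your additional remarks --- that the per-level bound of $2b$ strings comes from the tiling geometry of $[y]_2$ by base-$b$ cylinders and is independent of the choice of $m$ --- faithfully reproduce the argument the paper gives for the Property itself, so nothing is missing.
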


By Corollary \ref{caprox}\ $f(y)=d^{(2)}_{|y|/\log b +
c'\log|y|}(y)$ approximates $d^{(2)}(y)$ within a $1/|y|^3\cdot
1/d(\lambda)$ bound, and by the last corollary $f$ can be computed
in  online
 nearly linear time.
  Using
that $d(w)$ is constant for $|w|\le 2^b$ we have a  nearly linear
time bound independent of $b$.

 This concludes the proof of Theorem
\ref{theo42}.
\end{proofof}

\section{Absolutely Normal Numbers}\label{sec5}

In this section we give an algorithm that  diagonalizes against the Lempel-Ziv martingales for all bases  in nearly linear
time.

We use the following theorem  which is a union lemma for online
nearly linear martingales that works for a set of martingales that
is uniformly computable, uniformly approximated and uniformly
polinomially bounded.

\begin{theorem}\label{theo51}
Let $(d_k)_{k\in\N}$ be a sequence of base-2 martingales such that
for each of them there exists a function $\widehat{d_k}: \strings
\to [0,\infty)$ with the following  properties

\begin{enumerate}
\item $\widehat{d_k}$ is computable in
 an online
nearly linear time bound that does not depend on $k$.
\item   There is $a, c>1$ such that for  every $k\in\N$, $y\in\strings$ \[|d_k(y)-\widehat{d_k}(y)|\le
\frac{a}{|y|^c}.\]
\item $d_k(\lambda)=\widehat{d_k}(\lambda)= 1$ and  there is $a, c>1$ such that for  every $k\in\N$,
$y\in\strings$, $d_k(y)\le a |y|^c$.
\end{enumerate}

Then we can compute in  online
 nearly linear time a binary sequence $x$ such that, for every $k$, $x\not\in\regSS[d_k]$.
\end{theorem}

\begin{proof}

 Let $d: \strings \to [0,\infty)$ be defined
by
\[d(w)=\sum_{k=1}^{\infty} 2^{-k}{d_k}(w).\]
Then $d$ is a martingale. Let $\widehat{d}: \strings \to [0,\infty)$
be
\[\widehat{d}(w)=\sum_{k=1}^{(c+2)\log|w|+\log a} 2^{-k}\widehat{d_k}(w).\]
Notice that $\widehat{d}$ is computable in
 online
nearly linear time.

\begin{claim} For each $w\in\strings$, $|d(w)-\widehat{d}(w)|\le (a+1)/|w|^{c+1}$.\end{claim}

\begin{eqnarray*}|d(w)-\widehat{d}(w)|&\le&
\sum_{k=1}^{(c+2)\log|w|+\log a} 2^{-k}|{d_k}(w)-\widehat{d_k}(w)|+ \sum_{k=(c+2)\log|w|+\log a+1}^{\infty} 2^{-k}{d_k}(w)  \\
&\le& \sum_{k=1}^{(c+2)\log|w|+\log a} 2^{-k}a/|w|^c+ \sum_{k=(c+2)\log|w|+\log a+1}^{\infty} 2^{-k}a|w|^c   \\
&\le& a/|w|^c+ a|w|^c/(a|w|^{c+2}) =a/|w|^c+ 1/|w|^2.
\end{eqnarray*}

Our algorithm will diagonalize against $d$, constructing a binary
sequence $x$ as follows. If $x\upto n$ has been defined then choose
the next bit of $x$ as  $i\in\{0, 1\}$  that minimizes
$\widehat{d}((x\upto n)i)$.

\begin{claim} If $x\not\in\regSS[d]$ then for every $k$,
$x\not\in\regSS[d_k]$.\end{claim}

Let $n\in\N, k\in\N$, $d_{k}(x\upto n) \le 2^{k} d(x\upto n)$.

\begin{claim} $x\not\in\regSS[d]$.\end{claim}

Let $w\in\strings$. We prove that for $i\in\{0, 1\}$ chosen to
minimize $\widehat{d}(wi)$ it holds that $d(wi)\le d(w)+
(a+1)/(|w|+1)^{c+1}$.

Since $\widehat{d}(wi)\le \widehat{d}(w(1-i))$ it holds that
\begin{eqnarray*}d(wi)&\le& \widehat{d}(wi)+ (a+1)/(|w|+1)^{c+1}\\ &\le& \widehat{d}(w(1-i))+
(a+1)/(|w|+1)^{c+1} \le {d}(w(1-i))+
2(a+1)/(|w|+1)^{c+1}.\end{eqnarray*}

Since $d$ is a martingale, it follows that  $d(wi)\le d(w) +
(a+1)/(|w|+1)^{c+1}$.

Therefore
\[d(x\upto n) \le d(x\upto (n-1)) +(a+1)/n^{c+1}\]
and $x\not\in\regSS[d]$.

\end{proof}

We now have all the ingredients for our main result.

\begin{theorem}
There is an explicit algorithm that computes the binary expansion of
an absolutely normal number $z$ in  online
 nearly linear time.
\end{theorem}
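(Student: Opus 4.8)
The plan is to assemble the three constructions of Sections \ref{sec3}--\ref{sec5} into a single algorithm, applying them uniformly across all bases $b\ge 2$. First I would fix an enumeration of the bases, and for each $b\ge 2$ start from the base-$b$ Lempel-Ziv martingale $d_{LZ(b)}$. Applying Construction \ref{cons31} (and Theorem \ref{theo32}) yields, for each $b$, a polynomially bounded base-$b$ martingale $d_b'$ with a nondecreasing savings account $g_b'$ satisfying $\expSS[d_{LZ(b)}]\subseteq\regSS[g_b']$, computable in a nearly linear time bound that does \emph{not} depend on $b$. Next, applying Construction \ref{cons41} (and Theorem \ref{theo42}) to $d_b'$ and $g_b'$ produces a base-2 martingale $d_b^{(2)}$ with $\real_b(\regSS[g_b'])\subseteq\real_2(\regSS[d_b^{(2)}])$, together with the slow approximation $\widehat{d_b^{(2)}}$ from Corollary \ref{caprox}, which satisfies $|d_b^{(2)}(y)-\widehat{d_b^{(2)}}(y)|\le 1/|y|^3$ and is computable in a nearly linear time bound independent of $b$ (using only $y$ with $\log|y|\ge 2b$, as in the proof of Theorem \ref{theo42}).

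Now I would index these by setting $d_k=d_{k+1}^{(2)}$ and $\widehat{d_k}=\widehat{d_{k+1}^{(2)}}$ for $k\in\N$, so that $(d_k)_k$ is a sequence of base-2 martingales meeting the two hypotheses of Theorem \ref{theo51}: the $1/|y|^3$ approximation bound, and a nearly linear time bound on $\widehat{d_k}$ independent of $k$. Theorem \ref{theo51} then yields, in nearly linear time, a single binary sequence $z$ (the sequence called $x$ there) with $z\notin\regSS[d_k]$ for every $k$, i.e.\ $z\notin\regSS[d_b^{(2)}]$ for every base $b\ge 2$.

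It remains to argue that $z:=\real_2(z)$ is absolutely normal. Fix any base $b\ge 2$ and suppose, for contradiction, that $\seq_b(z)$ is not normal in base $b$. By the Schnorr--Stimm dichotomy combined with Feder's result (as summarized at the end of Section \ref{sec2}), $d_{LZ(b)}$ succeeds exponentially on $\seq_b(z)$, so $\seq_b(z)\in\expSS[d_{LZ(b)}]\subseteq\regSS[g_b']$. Then $z=\real_b(\seq_b(z))\in\real_b(\regSS[g_b'])\subseteq\real_2(\regSS[d_b^{(2)}])$, hence $\seq_2(z)\in\regSS[d_b^{(2)}]=\regSS[d_{b-1}]$, contradicting the defining property of $z$. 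Therefore $\seq_b(z)$ is normal in every base $b$, so $z$ is absolutely normal. Finally, the overall running time is nearly linear because each of the three transformations preserves a nearly linear time bound that is uniform in $b$, and the diagonalization of Theorem \ref{theo51} introduces only polylogarithmic overhead (the sum has $\log^2 n$ terms, each computed in nearly linear time, with the $2^{-k-2^{\sqrt k}}$ weights ensuring convergence).

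The main obstacle is bookkeeping the \emph{uniformity} of the time bounds: each construction must be carried out so that the nearly linear time bound genuinely does not depend on the base $b$ (or index $k$), which is why the truncations ``$\log|y|\ge 2b$'' and the cutoff $k\le\log^2|w|$ appear, and one must check these truncations do not destroy the success-set inclusions. The success-set inclusions themselves are robust under such truncation precisely because each relevant $S$ lands in $\regSS[g_b']$ at some finite prefix and the savings account is nondecreasing thereafter, so only finitely much of each martingale's behavior matters; verifying this interacts cleanly with the cutoffs but requires care to state correctly.
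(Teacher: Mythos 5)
Your proposal is correct and follows essentially the same route as the paper: chain Theorem \ref{theo32} (savings account for the Lempel-Ziv martingale), Theorem \ref{theo42} (base change with the $1/|y|^3$ approximation), and Theorem \ref{theo51} (uniform diagonalization), then conclude absolute normality via the Schnorr--Stimm/Feder connection from Section \ref{sec2}. Your added remarks on the uniformity of the time bounds and the harmlessness of the truncations are consistent with what the paper does.
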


\begin{proof}
The algorithm arises from a  combination of
 Theorem \ref{theo32}, Lemma \ref{lem2x}, Theorem \ref{theo42}, and Theorem \ref{theo51},  notice that all of them
give fully explicit constructions.

As explained in section 2, a real number $z$ is absolutely normal if
none of the martingales  $\dlzp$  succeed exponentially on the
base-$b$ expansion of $z$.

For each $b$, let  $d$   be the polynomially bounded and
 online
 nearly linear
time computable  supermartingale   with a savings account $g'$
defined in Theorem \ref{theo32}\ and construction \ref{cons31}\ as a
conservative substitute of
 $\dlzp$ (that is, $\expSS[\dlz]=\expSS[\dlzp]\subseteq
\regSS[g']$).

Let $d'$ be the online
 nearly linear
time computable and polinomially bounded martingale with $d'(w)\ge
d(w)$ for all $w$ given by Lemma \ref{lem2x}. Notice that since
$d'(w)\ge d(w)$, $g'$ is a savings account for $d'$.

For $b\ne 2$, we  now use Theorem \ref{theo42}\ for $d', g'$ and we
have
 an online
 nearly
linear time computable martingale $d^{(2)}:\strings \to [0, \infty)$
that succeeds on the base-2 expansion of the irrational reals with
base-$b$ expansion in $\regSS[g']$ ($real_b(\regSS[g'])-\Q\subseteq
real_2(\regSS[d^{(2)}])$).

For $b=2$ we directly take $d^{(2)}(w)=d'(w)/d'(\lambda)$. Notice
that $\Q\subseteq real_b(\regSS[d'])$ because
$\seq_b(\Q)\subseteq\expSS[\dlz]$.

For  each $b$ the computation of $d_b=d^{(2)}$ fulfills the
conditions of Theorem \ref{theo51}, so we can compute in
 online
 nearly
linear time a binary sequence $x$ such that, for every $b$,
$x\not\in\regSS[d_b]$ and therefore $real_2(x)\not\in
real_b(\expSS[\dlz])$. So $x$ is the binary expansion of an
absolutely normal number.

\end{proof}

\section{Open Problem}\label{sec6}

Many questions arise naturally from this work, but the following
problem appears to be especially likely to demand new and useful
methods.

As we have seen, normal  numbers are closely connected to the theory
of finite automata.  Schnorr and Stimm \cite{SchSti72}\ proved that
normality is exactly the finite-state case of randomness.  That is,
a real number $\alpha$ is normal in a base $b\ge 2$ if and only if
no finite-state automaton can make unbounded money betting on the
successive digits of the base-$b$ expansion of $\alpha$ with fair
payoffs.  The theory of finite-state dimension \cite{FSD}, which
constrains Hausdorff dimension \cite{Haus19}\ to finite-state
automata, assigns each real number $\alpha$ a finite-state dimension
$\dimFS^{(b)}(\alpha) \in [0,1]$ in each base $b$.  A real number
$\alpha$ then turns out to be normal in base $b$ if and only if
$\dimFS^{(b)}(\alpha)=1$ \cite{BoHiVi05}. Do there exist {\sl
absolutely dimensioned numbers}, i.e., real numbers $\alpha$ for
which $\dimFS(\alpha) = \dimFS^{(b)}(\alpha)$ does not depend on
$b$, and $0<\dimFS(\alpha) <1$?

\section*{Acknowledgments}
The first author's research was supported in part by National
Science Foundation Grants 0652569, 1143830, 1247051, 1545028, and
1900716. Part of this author's work was done during a sabbatical at
Caltech and the Isaac Newton Institute for Mathematical Sciences at
the University of Cambridge,  part was done during three weeks at
Heidelberg University, with support from the Mathematics Center
Heidelberg and the Heidelberg University Institute of Computer
Science, and part was done during the workshop ``Normal Numbers:
Arithmetical, Computational and Probabilistic Aspects'' at the Erwin
Schr\"odinger International Institute for Mathematics and Physics at
the University of Vienna.

The second author's research was supported in part by Spanish
Government
   MEC Grants  TIN2011-27479-C04-01, TIN2016-80347-R, and PID2019-104358RB-I00.  Part of this author's work was
  done during a research stay at the Isaac Newton Institute for Mathematical Sciences at the University of Cambridge,
  part was done during three weeks at Heidelberg University, with support from the Mathematics Center Heidelberg and the
  Heidelberg University Institute of Computer Science, and part was done during the workshop ``Normal Numbers:
Arithmetical, Computational and Probabilistic Aspects'' at the Erwin
Schr\"odinger International Institute for Mathematics and Physics at
the University of Vienna.

We also thank two anonymous reviewers of this paper for very useful
and detailed comments.

\bibliographystyle{abbrv}
\bibliography{../../biblios/todo}

\begin{thebibliography}{10}

\bibitem{ABSS17}
C.~Aistleitner, V.~Becher, A.-M. Scheerer, and T.~Slaman.
\newblock On the construction of absolutely normal numbers.
\newblock {\em Acta Arithmetica}, 180(4):333--346, 2017.

\bibitem{AngVal79}
D.~Angluin and L.~Valiant.
\newblock Fast probabilistic algorithm for {H}amiltonian circuits and
  matchings.
\newblock {\em Journal of Computer and System Sciences}, 18:155--193, 1979.

\bibitem{BeFiPi07}
V.~Becher, S.~Figueira, and R.~Picchi.
\newblock Turing's unpublished algorithm for normal numbers.
\newblock {\em Theoretical Computer Science}, 377:126--138, 2007.

\bibitem{BeHeSl13}
V.~Becher, P.~A. Heiber, and T.~A. Slaman.
\newblock A polynomial-time algorithm for computing absolutely normal numbers.
\newblock {\em Information and Computation}, 232:1--9, 2013.

\bibitem{Bill95}
P.~Billingsley.
\newblock {\em Probability and Measure, {\em third edition}}.
\newblock John Wiley and Sons, New York, N.Y., 1995.

\bibitem{Bore09}
E.~Borel.
\newblock Sur les probabilit\'es d\'enombrables et leurs applications
  arithm\'etiques.
\newblock {\em Rendiconti del Circolo Matematico di Palermo}, 27(1):247--271,
  1909.

\bibitem{BorBai08}
J.~Borwein and D.~Bailey.
\newblock {\em Mathematics by Experiment: Plausible Reasoning in the 21st
  Century, {\em second edition}}.
\newblock A. K. Peters, 2008.

\bibitem{BoHiVi05}
C.~Bourke, J.~M. Hitchcock, and N.~V. Vinodchandran.
\newblock Entropy rates and finite-state dimension.
\newblock {\em Theoretical Computer Science}, 349(3):392--406, 2005.

\bibitem{Buge12}
Y.~Bugeaud.
\newblock {\em Distribution modulo one and Diophantine approximation}, volume
  193.
\newblock Cambridge University Press, 2012.

\bibitem{Cass59}
J.~W.~S. Cassels.
\newblock On a problem of {S}teinhaus about normal numbers.
\newblock {\em Colloquium Mathematicum}, 7:95--101, 1959.

\bibitem{FSD}
J.~J. Dai, J.~I. Lathrop, J.~H. Lutz, and E.~Mayordomo.
\newblock Finite-state dimension.
\newblock {\em Theoretical Computer Science}, 310:1--33, 2004.

\bibitem{Doob40}
J.~Doob.
\newblock Regularity properties of certain families of chance variables.
\newblock {\em Transactions of the American Mathematical Society}, 47:455--486,
  1940.

\bibitem{Fede91}
M.~Feder.
\newblock Gambling using a finite state machine.
\newblock {\em IEEE Transactions on Information Theory}, 37:1459--1461, 1991.

\bibitem{FiNi13}
S.~Figueira and A.~Nies.
\newblock Feasible analysis and randomness.
\newblock Manuscript, 2013.

\bibitem{FigNie15}
S.~Figueira and A.~Nies.
\newblock Feasible analysis, randomness, and base invariance.
\newblock {\em Theory of Computing Systems}, 56:439--464, 2015.

\bibitem{GurShe89}
Y.~Gurevich and S.~Shelah.
\newblock Nearly linear time.
\newblock In {\em Proceedings of the First Symposium on Logical Foundations of
  Computer Science}. Springer, 1989.

\bibitem{Haus19}
F.~{H}ausdorff.
\newblock Dimension und {\"a}u{\ss}eres {M}a{\ss}.
\newblock {\em Math. Ann.}, 79:157--179, 1919.

\bibitem{HitLut06}
J.~Hitchcock and J.~Lutz.
\newblock Why computational complexity requires stricter martingales.
\newblock {\em Theory of Computing Systems}, 39:277--296, 2006.

\bibitem{knut65}
D.~E. Knuth.
\newblock Construction of a random sequence.
\newblock {\em BIT {N}umerical {M}athematics}, 5:246--250, 1965.

\bibitem{Kolmo50}
A.~N. Kolmogorov.
\newblock {\em Foundations of the Theory of Probability}.
\newblock Chelsea, 1950.

\bibitem{Lebe17}
H.~Lebesgue.
\newblock Sur certaines demonstrations d'existence.
\newblock {\em Bull. Soc. Math. de France}, 45:132--144, 1917.

\bibitem{LemZiv78}
A.~Lempel and J.~Ziv.
\newblock Compression of individual sequences via variable rate coding.
\newblock {\em IEEE Transaction on Information Theory}, 24:530--536, 1978.

\bibitem{Levy35}
P.~L{\'e}vy.
\newblock Propri{\'e}t{\'e}s asymptotiques des sommes de variables
  ind{\'e}pendantes ou enchain{\'e}es.
\newblock {\em Journal des math{\'e}matiques pures et appliqu{\'e}es. Series
  9.}, 14(4):347--402, 1935.

\bibitem{Levy37}
P.~L{\'e}vy.
\newblock {\em Th\'{e}orie de l'Addition des Variables Aleatoires}.
\newblock Gauthier-Villars, 1937 (second edition 1954).

\bibitem{MaScTi18}
M.~Madritsch, A.~Scheerer, and R.~Tichy.
\newblock Computable absolutely {P}isot normal numbers.
\newblock {\em Acta Arithmetica, Instytut Matematyczny}, 184:7--29, 2018.

\bibitem{Mayo13}
E.~Mayordomo.
\newblock Construction of an absolutely normal real number in polynomial time.
\newblock Manuscript, 2013.

\bibitem{Schm60}
W.~M. Schmidt.
\newblock On normal numbers.
\newblock {\em Pacific J. Math}, 10(2):661--672, 1960.

\bibitem{SchSti72}
C.-P. Schnorr and H.~Stimm.
\newblock Endliche automaten und zufallsfolgen.
\newblock {\em Acta Informatica}, 1(4):345--359, 1972.

\bibitem{Sier17}
W.~Sierpinski.
\newblock D{\'e}monstration {\'e}l{\'e}mentaire du th{\'e}or{\`e}me de {M}.
  {B}orel sur les nombres absolument normaux et d{\'e}termination effective
  d'une tel nombre.
\newblock {\em Bull. Soc. Math. France}, 45:125--132, 1917.

\bibitem{Turi36}
A.~M. Turing.
\newblock On computable numbers with an application to the
  {E}ntscheidungsproblem.
\newblock {\em Proceedings of the London Mathematical Society}, 42:230--265,
  1936-1937.

\bibitem{Turi13}
A.~M. Turing.
\newblock A note on normal numbers.
\newblock In S.~B. Cooper and J.~van Leeuwen, editors, {\em Alan Turing: His
  Work and Impact}. Elsevier Science, 2013.

\bibitem{Vill39}
J.~Ville.
\newblock {\em \'{E}tude Critique de la Notion de Collectif}.
\newblock Gauthier--Villars, Paris, 1939.

\bibitem{Wago85}
S.~Wagon.
\newblock Is $\pi$ normal?
\newblock {\em Math. Intelligencer}, 7:65--67, 1985.

\end{thebibliography}

\end{document}